\newtheorem{theorem}{Theorem}
\newtheorem{lemma}{Lemma}
\newtheorem{definition}{Definition}
\newtheorem{proposition}{Proposition}
\newtheorem{example}{Example}
\newtheorem{corollary}{Corollary}
\def\tT{\mathbb{T}}
\def\bN{\mathbb{N}}
\def\Rgd{\mathit{Rgd}}
\def\leaf{\mathit{Leaf}}
\def\cliques{\mathit{cliques}}
\def\cvr{\mathit{cfg}}
\def\cI{\mathcal{I}}
\def\cJ{\mathcal{J}}
\def\oC{\overline{C}}
\def\sse{\subseteq}
\def\es{\emptyset}
\def\cC{\mathcal{P}}
\def\repr{\mathit{repr}}
\def\cP{\mathcal{P}}
\title{Enumerating All Maximal Clique-Partitions of an Undirected Graph}
\author{Mircea Marin
\institute{West University of Timi\c soara\\ Timi\c soara, Romania}
\email{mircea.marin@e-uvt.ro}
\and
Temur Kutsia \qquad\qquad Cleo Pau
\institute{Research Institute for Symbolic Computation\\
Johannes Kepler University\\
Linz, Austria}
\email{\quad kutsia@risc.jku.at \quad\qquad ioana.pau@risc.jku.at}
\and Mikheil Rukhaia
\institute{Institute of Applied Mathematics\\ Tbilisi State University\\ Tbilisi, Georgia}
\email{mrukhaia@gmail.com}
}
\begin{document}
\maketitle

\begin{abstract}
We address the problem of enumerating all maximal clique-partitions of an undirected graph and present an algorithm based on the observation that every maximal clique-partition can be produced from the maximal clique-cover of the graph by assigning the vertices shared among maximal cliques, to belong to only one clique. 
This simple algorithm has the following drawbacks: (1) the search space is very large; (2) it finds some clique-partitions which are not maximal; and (3) some  clique-partitions are found more than once. 
We propose two criteria to avoid these drawbacks. The outcome is an algorithm that explores a much smaller search space and guarantees that every maximal clique-partition is computed only once. 

The algorithm can be used in problems such as anti-unification with proximity relations or in resource allocation tasks when one looks for several alternative ways to allocate resources.
\end{abstract}
{\bf Acknowledgments.} Partially supported by the Austrian Science Fund (FWF) under project P 35530 and by the Shota Rustaveli National Science Foundation of Georgia under project FR-21-16725.
\section{Introduction}
\label{sec:intro} 
 In this paper, we are interested in computing all maximal clique-partitions in a graph. The original motivation comes from anti-unification with proximity relations. Anti-unification is a well-known technique in computational logic. It was introduced in \cite{Plotkin70,Reynolds70} and was quite intensively investigated in the last years, see, e.g. \cite{DBLP:journals/fss/Ait-KaciP20,DBLP:journals/amai/AlpuenteEMS22,DBLP:conf/cade/KutsiaP22,DBLP:conf/tbillc/KutsiaP19,https://doi.org/10.48550/arxiv.2302.00277}. Given two first-order logic terms $t_1$ and $t_2$, it aims at computing a least general generalization of those terms. That means, one is looking for a term $s$ from which $t_1$ and $t_2$ can be obtained by variable substitutions. Such an $s$ is called a generalization of $t_1$ and $t_2$. Moreover, there should be no other generalization $r$ of $t_1$ and $t_2$, which can be obtained from $s$ by a substitution. For instance, if $t_1$ and $t_2$ are the ground terms  $f(a,a)$ and $f(b,b)$, then anti-unification computes their least general generalization $f(x,x)$. Replacing variable $x$ by $a$ (resp. by $b$) in it, one gets $f(a,a)$ (resp. $f(b,b)$). Note that $f(x,y)$ and  $x$ are also generalizations of $f(a,a)$ and $f(b,b)$, but they are not least general. Anti-unification has been successfully used in inductive reasoning, inductive logic programming, reasoning and programming by analogy, term set compression, software code clone detection, etc.
 
 In many applications, that can be also relevant for anti-unification, one has to deal with imprecise or vague information. In such circumstances, one tends to consider two objects the same, if they are ``sufficiently close'' to each other. However, such a proximity relation is not transitive. 
 Nontransitivity has to be dealt with in a special way. Proximity relations (reflexive symmetric fuzzy binary relations) characterize the notion of `being close' numerically. They become crisp once we fix the threshold from which on, the distance between the objects can be called `close'. 

 Symbolic constraint solving (for unification, matching, and anti-unification constraints) over proximity relations has been studied recently by various authors, e.g., \cite{DBLP:conf/cade/KutsiaP22,PauThesis,DBLP:journals/fss/Ait-KaciP20,DBLP:journals/fss/IranzoR15,DBLP:journals/tfs/IranzoS21}. The approaches can be characterized as class-based and block-based. Considering proximity relations as (weighted) undirected graphs, a proximity class of a vertex is its neighborhood (i.e., the set of vertices to which the current vertex is connected by an edge), while a proximity block is a clique. In the class-based approach to proximity constraint solving, two objects are considered proximal if one of them belongs to the proximity class of another. In the block-based approach, two objects are proximal if they belong to the same \emph{unique} maximal proximity block.  
The block-based approach is one that is closely related to the subject of this paper. To compute a minimal complete set of generalizations of two first-order logic terms with this approach, one needs to consider all maximal clique-partitions of the graph induced by the proximity relation between constants and between function symbols. For instance, if $a$ is close to both $b$ and $c$, but $b$ and $c$ are not close to each other, then $f(a,a)$ and $f(b,c)$ have two minimal common generalizations: $f(a,x)$ and $f(x,a)$. In this example, the proximity graph would be $(\{ a,b,c\}, \{(a,b),(a,c)\})$. It has two maximal clique partitions $\{\{a,b\}, \{c\}\}$ and $\{ \{a,c\},\{b\} \}$ that tell exactly which symbols should be considered the same. In the first case these are $a$ and $b$, leading to the generalization $f(a,x)$, and in the second case they are $a$ and $c$, giving $f(x,a)$. 
Also, in the block-based approach to approximate unification, one would need to maintain maximal clique-partitions of the proximity graph in order to detect that, e.g., $f(x,x)$ and $f(b,c)$ are not unifiable in the abovementioned proximity relation, see, e.g.,~\cite{DBLP:journals/tfs/IranzoS21}.
 
 
Also, the resource allocation problem, when one looks for several alternative ways to allocate resources, can be an application area of the algorithm considered in this paper.

Whereas the problem of computing all maximal cliques is well studied \cite{DBLP:journals/cacm/BronK73,DBLP:journals/tcs/TomitaTT06,DBLP:conf/walcom/Tomita17,DBLP:journals/tcs/CazalsK08}, the problem of computing all maximal clique-partitions became of interest only recently. To the best our knowledge, the only previous study of it is the one reported in 2022 by C. Pau her PhD thesis \cite[Sect. 3.3.2]{PauThesis}. In this paper we provide a more in-depth analysis  of the problem and propose another algorithm which performs better than the one described in \cite{PauThesis}.
For a given undirected graph $G$, in order to compute all its maximal clique-partitions, we use a kind of top-down approach. First, we compute the maximal clique cover of $G$, and the list $S$ of all graph vertices which are shared among maximal cliques. 
By a systematic enumeration of all possibilities to assign each vertex in $S$ to only one clique where it belongs, we obtain an algorithm that finds all clique-partitions of $G$ in a tree-like search space starting from the maximal clique-cover of $G$. Our algorithm is optimal in the following sense: (1) It computes only maximal clique-partitions, by avoiding computations below some nodes of the search space which yield non-maximal clique-partitions, and (2) Each maximal clique-partition is computed only once. Moreover, our algorithm computes the maximal clique-partitions incrementally, in the following sense:  If one does not want to get all  solutions, he/she can stop the algorithm after computing a certain number of solutions. As a result, the computation of maximal clique-partitions can be streamlined with other operations on them.

The  paper is structured as follows.  In Section \ref{sect:prelim} we introduce some preliminary notions, the search space $\tT^S(G)$ for maximal clique-partitions of an undirected graph $G$ and its main properties. The following two sections describe our main contributions: a criterion to avoid the computation of nonmaximal clique-partitions (Section \ref{sect:nonmax}), and a criterion to avoid the redundant computations of the same maximal clique-partition (Section  \ref{sect:norepeat}).
In Section \ref{sect:combo} we indicate how to combine these two criteria and define an algorithm to enumerate all maximal clique-partitions of an undirected graph. An analysis of the runtime complexity of our algorithm is performed in Section \ref{sect:complexity}. In the last section we draw some conclusions.

\section{Preliminaries}
\label{sect:prelim}
We consider undirected graphs $G=(V,E)$ where $V$ is the set of nodes and $E$ is the set of edges. A {\bf clique} in $G$ is a nonempty subset of $V$ such that every two vertices of it are incident. A clique $C$ is {\bf maximal} if it is not a proper subset of another clique.

A {\bf cover} of $G$ is a finite family $\{V_1,\ldots,V_m\}$ of nonempty sets of nodes such that $\bigcup_{i=1}^mV_i=V$.
A {\bf clique-cover} of $G$ is a cover $\cC$ of $V$ such that every $C\in\cC$ is a clique in $G$.
A partition into cliques, or shortly {\bf clique-partition} of $G$, is a partition $\cC$ of $V$ such that every $C\in\cC$ is a clique in $G$. $\cC$ is a {\bf maximal clique-partition} of $G$ if $\cC$ is a clique-partition of $G$ and $\cC$ does not contain two different cliques $C,C'$ such that $C\cup C'$ is a clique.

 A graph may have several maximal clique-partitions. In the literature, a problem that was studied intensively is to compute a maximal clique-partition with the smallest number of cliques. Tseng's algorithm~\cite{tseng}, introduced to solve this problem, was motivated by its application in the design of processors. Later, Bhasker and Samad~\cite{bhasker} proposed two other algorithms. They also derived the upper bound on the number of cliques in a partition and showed that there exists a partition containing a maximal clique of the graph. 
 A problem closely related to clique-partition is the vertex coloring problem, which requires to color the vertices of a graph in such a way that two adjacent vertices have different colors. In fact, a clique-partitioning problem of a graph is equivalent to the coloring problem of its complement graph. Both problems are NP-complete~\cite{DBLP:conf/coco/Karp72}.

We write $\bN$ for the set of natural numbers starting from 1, and $\bN^*$ for the monoid of finite sequences of numbers from $\bN$ with the operation of sequence concatenation and neutral element $\epsilon$. If $n\in\bN$ we assume that $[n]$ is the set of natural numbers $k$ such that $1\leq k\leq n$. 

From now on we assume that $G=(V,E)$ is an undirected graph for which we know:
\begin{enumerate}
\item An enumeration $\cvr_0:=[\oC_1,\ldots,\oC_m]$ of all maximal cliques of $G$. We denote the maximal cliques of $G$ with identifiers with an overbar.
The value $m$ indicates the number of maximal cliques of graph $G$.
\item For every vertex $v\in V$ and set of nodes $C\sse V$ we define: 
\begin{itemize}
\item $\cliques(v):=\{i\mid v\in \oC_i\}$, and  $d(v):=|\cliques(v)|$,
\item $\cliques(C):=\bigcap_{v\in C}\cliques(v)$ for every set of nodes $C\sse V$. 
\end{itemize}
\item An enumeration $S:=[v_1,\ldots,v_s]$ of all nodes $v\in V$ with $d(v)>1$.
\item $\Rgd:=\{k\in[m]\mid \text{there is a vertex }v\in V\text{ with }\cliques(v)=\{k\}\}.$
\end{enumerate}

$\cliques(v)$ is the set of indices of maximal cliques where node $v$ belongs, and $\cliques(C)$ is the set of indices of maximal cliques which contain $C$. Thus, a nonempty set of nodes $C$ is not a clique iff $\cliques(C)=\es$.
The nodes $v\in S$ are those that belong to more than one maximal clique. To transform the maximal clique cover $\{\oC_1,\ldots,\oC_m\}$ into a clique partition, we must assign every node $v\in S$ to belong to only one clique. To formalize this process, we introduce a couple of auxiliary notions. 

A {\bf configuration} is an enumeration 
$[C_1,\ldots,C_m]$ of empty sets or cliques of $G$, such that $\bigcup_{k=1}^m C_k=V$ and $C_i\sse\oC_i$ for all $1\leq i\leq m$. 
In particular, $\cvr_0=[\oC_1,\ldots,\oC_m]$ is a configuration.
Every configuration $\cvr=[C_1,\ldots,C_m]$ represents a clique cover denoted by $$\repr(\cvr):=\{C_i\mid 1\leq i\leq m\text{ and }C_i\neq \es\}.$$

We distinguish two sets of configurations of interest: the set $\Pi_{\mathit{cp}}(G)$ of configurations $\cvr$ for which $\repr(\cvr)$ is a clique partition of $G$; and the set $\Pi_{\mathit{mcp}}(G)$ is the set of configurations $\cvr$ for which $\repr(\cvr)$ is a maximal clique partition of $G$.
\begin{lemma}
\label{mcp-cp}
Every maximal clique-partition of $G$ is represented by a configuration in $\Pi_{\mathit{mcp}}(G)$.
\end{lemma}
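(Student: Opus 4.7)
\smallskip\noindent\textbf{Proof plan.} Fix a maximal clique-partition $\cP=\{P_1,\ldots,P_n\}$ of $G$. My goal is to exhibit a configuration $\cvr=[C_1,\ldots,C_m]$ with $\repr(\cvr)=\cP$; since $\repr(\cvr)$ will then be a maximal clique-partition, this $\cvr$ lies in $\Pi_{\mathit{mcp}}(G)$ by definition. The plan is to assign each block $P_j$ of $\cP$ to a distinct slot $i$ of the configuration in such a way that $P_j\sse\oC_i$, set $C_i:=P_j$, and set $C_i:=\es$ for every slot not so used.

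The first step is to observe that for each $j\in[n]$ the set $\cliques(P_j)=\bigcap_{v\in P_j}\cliques(v)$ is nonempty: $P_j$ is a clique of $G$, so it extends to a maximal clique, i.e.\ some $\oC_i$ contains $P_j$, witnessing $i\in\cliques(P_j)$. The second, decisive step is to show that these index sets are pairwise disjoint: if $j\neq j'$ and some $i$ belonged to both $\cliques(P_j)$ and $\cliques(P_{j'})$, then $P_j\cup P_{j'}\sse\oC_i$, so $P_j\cup P_{j'}$ would itself be a clique of $G$. But $P_j,P_{j'}$ are two distinct members of $\cP$, contradicting the assumption that $\cP$ is a \emph{maximal} clique-partition. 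This disjointness is the one place where maximality is actually used, and it is the conceptual heart of the argument.

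Given disjointness, I can choose, for each $j\in[n]$, an index $f(j)\in\cliques(P_j)$, and the function $f\colon[n]\to[m]$ is automatically injective. Define $C_{f(j)}:=P_j$ for every $j\in[n]$, and $C_i:=\es$ for $i\in[m]\setminus f([n])$. Each $C_i$ is either empty or a clique contained in $\oC_i$ (by the choice of $f(j)\in\cliques(P_j)$), and $\bigcup_{i=1}^m C_i=\bigcup_{j=1}^n P_j=V$ since $\cP$ is a partition of $V$. Hence $\cvr=[C_1,\ldots,C_m]$ is a configuration, and $\repr(\cvr)=\{C_{f(1)},\ldots,C_{f(n)}\}=\{P_1,\ldots,P_n\}=\cP$, which completes the argument.

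I do not expect any further technical obstacle: the only nontrivial point is the disjointness of the sets $\cliques(P_j)$, and routine bookkeeping handles everything else. The nonuniqueness of $f$ when some $P_j$ is contained in several maximal cliques foreshadows why different configurations can represent the same partition, an issue addressed later in the paper.
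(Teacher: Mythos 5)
Your proof is correct and follows essentially the same route as the paper's: both arguments hinge on the observation that two distinct blocks contained in a common maximal clique would have a clique as their union, contradicting maximality, and then use the resulting injective assignment of blocks to maximal-clique slots to build the configuration. The only cosmetic difference is that you establish pairwise disjointness of the sets $\cliques(P_j)$ (so any choice function works), whereas the paper fixes one containment map $\varphi$ and proves it injective.
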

\begin{proof}
Let $\cP$ be a maximal clique-partition of $G$. Then for every $C\in\cP$ there exists a maximal clique $\varphi(C)$ such that $C\sse\varphi(C)$. 
If $C_1,C_2\in\cP$ and $\varphi(C_1)=\varphi(C_2)$ then $C_1\cup C_2\sse\varphi(C_1)$ is a clique, 
and the maximality of $\cP$ implies $C_1=C_2$. Thus $\varphi$ is injective and we can define $\cvr=[C_1,\ldots,C_m]$ by
$$C_k:=\left\{\begin{array}{ll}
C&\text{if }C\in\cP\text{ and }\varphi(C)=\oC_k,\\
\es&\text{otherwise}
\end{array}\right.$$
for all $k\in[m]$. Then $\repr(\cvr)=\cP$ because $\varphi$ is injective.
Thus $\cvr\in\Pi_{\mathit{mcp}}(G)$.
\end{proof}
For every node $v\in V$ we define the relation 
$[C_1,\ldots,C_m]\to_{(v,i)}[C'_1,\ldots,C'_m]$ to hold if
$v\in C_i$ for some $1\leq i\leq m$, $C'_i=C_i$ and $C'_j=C_j-\{v\}$ for all $j\in[m]-\{i\}.$ This relation corresponds to the decision to assign node $v_i$ to the $i$-th clique of the configuration.

\subsection{The search space $\tT^S(G)$}
It is easy to see that, if $S=[v_1,v_2,\ldots,v_s]$ and $i_1\in \cliques(v_1),i_2\in \cliques(v_2)$, \ldots, $i_s\in\cliques(v_s)$ then $\cvr_0=[\oC_0,\ldots, \oC_m] \to_{(v_1,i_1)} \cvr_1 \to_{(v_2,i_2)}\ldots\to_{(v_s,i_s)}\cvr_s$ is a sequence of decision steps that ends with a configuration whose representation is a partition of $G$. 

We let $\tT^S(G)$ be the tree with root $\cvr_0$ and edges $\cvr\to_{(v,i)}\cvr'$ which correspond to the decision to keep the shared node $v\in S$ in the $i$-th component of the configuration.
We will use $\tT^S(G)$ as the search space for maximal clique-partitions, and let $\leaf(\tT^S(G))$ be the set of leaf configurations in $\tT^S(G)$.
\begin{example}
\label{empi}
The simple graph\quad
\begin{tikzpicture}[inner sep=1.5pt,scale=.6,baseline=-.2em]
\node[left] at (-.8,0) {\small $G$:};
\node (4) at (2.5,0) {\small $x_4$};
\node (5) at (3.5,.866) {\small $x_5$};
\node (1) at (1,0) {\small $x_1$};
\node (2) at (-0.5,0.866) {\small $x_2$};
\node (6) at (3.5,-.866) {\small $x_6$};
\node (3) at (-.5,-.866) {\small $x_3$};
\draw  (1) -- (2) -- (3) -- (1) -- (4)-- (5)  (4) -- (6); 
\end{tikzpicture}\\[.4em]
has four maximal cliques: $\oC_1=\{x_1,x_2,x_3\}$, $\oC_2=\{x_1,x_4\}$, $\oC_3=\{x_4,x_5\}$, $\oC_3=\{x_4,x_6\}$ and two nodes shared among maximal cliques: $S=[v_1,v_2]$ where $v_1=x_4$, $v_2=x_1$. In this example we have $\cliques(x_1)=\{1,2\}$, $\cliques(x_2)=\{2,3,4\}$. The exhaustive search space for maximal clique partitions is the tree $\tT^S(G)$ depicted below:
\begin{center}
\begin{tikzpicture}[>=latex]
\node (s0) at (0,1.5) {\small $[\oC_1,\oC_2,\oC_3,\oC_4]$};
\node (s11) at (-3,0) {\small $[\oC_1,\{x_4\},\oC_3,\oC_4]$};
\node (s111) at (-5.5,-1.5) {\small $\cvr_1$};
\node[draw] (s112) at (-3,-1.5) {\small $\cvr_2$};
\node[draw] (s113) at (-.5,-1.5) {\small $\cvr_3$};
\node[draw] (s121) at (0.5,-1.5) {\small $\cvr_4$};
\node (s122) at (3,-1.5) {\small $\cvr_5$};
\node (s123) at (5.5,-1.5) {\small $\cvr_6$};
\node (s12) at (3,0) {\small $[\{x_2,x_3\},\oC_2,\oC_3,\oC_4]$};
\draw[->] (s11) -- node[sloped,above] {\small $(x_1,2)$}(s111);
\draw[->] (s11) -- node[left,near end] {\small $(x_1,3)$}(s112);
\draw[->] (s11) -- node[sloped,above] {\small $(x_1,4)$}(s113);
\draw[->] (s12) -- node[sloped,above] {\small $(x_1,2)$}(s121);
\draw[->] (s12) -- node[left,near end] {\small $(x_1,3)$}(s122);
\draw[->] (s12) -- node[sloped,above] {\small $(x_1,4)$}(s123);
\draw[<-] (s12) -- node[sloped,above]{\small $(x_4,2)$} (s0);
\draw[->] (s0)-- node[sloped,above]{\small $(x_4,1)$} (s11);
\end{tikzpicture}
\end{center}
where the leaf configurations are \\ [.5em]
 $\cvr_1=[\oC_1,\{x_4\},\{x_5\},\{x_6\}]$ with $\repr(\cvr_1)=\{\oC_1,\{x_4\},\{x_5\},\{x_6\}\}$.\\
 $\cvr_2=[\oC_1,\es,\oC_3,\{x_6\}]$ with  $\repr(\cvr_2)=\{\oC_1,\oC_3,\{x_6\}\}$. \\
$\cvr_3=[\oC_1,\es,\{x_5\},\oC_4]$ with $\repr(\cvr_3)=\{\oC_1,\{x_5\},\oC_4\}$. \\
$\cvr_4=[\{x_2,x_3\},\oC_2,\{x_5\},\{x_6\}]$ with $\repr(\cvr_4)=\{\{x_2,x_3\},\oC_2,\{x_5\},\{x_6\}\}$.\\
$\cvr_5=[\{x_2,x_3\},\{x_1\},C_3,\{x_6\}]$  with $\repr(\cvr_5)=\{\{x_2,x_3\},\{x_1\},C_3,\{x_6\}\}$.\\
$\cvr_6=[\{x_2,x_3\},\{x_1\},\{x_5\},C_4]$ with $\repr(\cvr_6)=\{\{x_2,x_3\},\{x_1\},\{x_5\},C_4\}$.

Only the final configurations $\cvr_2,\cvr_3,\cvr_4$ represent maximal clique-partitions of $G$:\\
$\cP_1=\{\oC_1,\oC_3,\{x_6\}\}=\repr(\cvr_2)$,\\ 
$\cP_2=\{\oC_1,\{x_5\},\oC_4\}=\repr(\cvr_3)$, and\\ 
$\cP_3=\{\{x_2,x_3\},\oC_2,\{x_5\},\{x_6\}\}=\repr(\cvr_4)$.\\
The other final configurations in $\tT^S(G)$ represent non-maximal clique-partitions of $G$.\qed
\end{example}
\subsubsection{Properties of the search space $\tT^S(G)$}
The following are immediate consequences of the definition: if $S=[v_1,v_2,\ldots,v_s]$ is the list of nodes shared among the maximal cliques of $G$ then 
\begin{enumerate}
\item $\tT^S(G)$ has depth $s$, and all its leaf configurations occur at depth $s$. 
\item Every internal configuration at depth $\ell<s$ in $\tT^S(G)$ has $d(v_\ell)$ children.
\end{enumerate}
For every configuration $\cvr$ in $\tT^S(G)$ there is a unique path
$$\cvr_0\to_{(v_1,i_1)}\cvr_1\to_{(v_2,i_2)}\ldots\to_{(v_\ell,i_\ell)}\cvr$$ from the root configuration to $\cvr$. We let $\delta(\cvr):=[i_1,\ldots,i_\ell]$ be the sequence of assignment decisions made for the shared nodes $v_1,\ldots,v_\ell\in S$. 

\begin{lemma}
\label{core-lemma}
If $\cvr\in\tT^S(G)$ with $\delta(\cvr)=[i_1,\ldots,i_\ell]$ then all descendants $[C_1,\ldots,C_m]$ of $\cvr$ in $\tT^S(G)$, including $\cvr$, have $C_i\neq\es$ for every $i\in \Rgd\cup\{i_1,\ldots,i_\ell\}$. 
\end{lemma}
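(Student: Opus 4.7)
The plan is to produce, for each $i \in \Rgd \cup \{i_1,\ldots,i_\ell\}$, an explicit ``witness'' vertex that lies in the $i$-th component of $\cvr$ and of every descendant of $\cvr$ in $\tT^S(G)$. This reduces the whole statement to the observation that a single decision step $\to_{(v,i)}$ only ever \emph{removes} occurrences of the particular vertex $v$ it acts on, and leaves the membership of every other vertex unchanged.

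First I would handle the case $i \in \Rgd$. By definition of $\Rgd$ there exists a vertex $w \in V$ with $\cliques(w) = \{i\}$; in particular $d(w)=1$, so $w \notin S$. Thus no edge of $\tT^S(G)$ is labelled with $w$ as its first component, so on the path from the root $\cvr_0=[\oC_1,\ldots,\oC_m]$ down to any descendant of $\cvr$, no step can remove $w$ from any component. Since $w \in \oC_i$ at the root and $w$ belongs to no other $\oC_k$, $w$ sits in the $i$-th component of every configuration encountered, and therefore that component is nonempty.

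Next I would handle the case $i = i_j$ for some $j \in [\ell]$, taking $v_j$ as the witness. The step $\cvr_{j-1}\to_{(v_j,i_j)}\cvr_j$ requires $v_j$ to lie in the $i_j$-th component of $\cvr_{j-1}$, and by the definition of $\to_{(v_j,i_j)}$ it is preserved there in $\cvr_j$. After this step the only configurations to consider are descendants of $\cvr_j$; each subsequent decision step has the form $\to_{(v_k,i_k)}$ with $k>j$, and since $S$ is an enumeration of pairwise distinct vertices, $v_k \neq v_j$, so such a step only removes occurrences of $v_k$ and does not touch $v_j$. Thus $v_j$ remains in the $i_j$-th component at $\cvr$ and at every descendant of $\cvr$, proving that component is nonempty.

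There is no real obstacle here; the only subtlety worth checking explicitly is that the two kinds of witnesses are consistent with the hypotheses on $\cvr$ itself (namely, that $\cvr$ is reached by the path $\cvr_0\to_{(v_1,i_1)}\cdots\to_{(v_\ell,i_\ell)}\cvr$, which is exactly the content of $\delta(\cvr)=[i_1,\ldots,i_\ell]$), and that the sets $\Rgd$ and $\{i_1,\ldots,i_\ell\}$ may overlap, in which case either witness works. A clean way to present the argument is therefore to combine both cases into a single induction on the length of the path from $\cvr$ to the descendant under consideration, with the base case supplied by the witnesses described above.
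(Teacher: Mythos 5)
Your proof is correct and follows essentially the same route as the paper's: for $i\in\Rgd$ the witness is a vertex $w$ with $\cliques(w)=\{i\}$, which is never in $S$ and hence never removed, and for $i=i_j$ the witness is $v_j$, which the step $\to_{(v_j,i_j)}$ places (and keeps) in the $i_j$-th component. You merely spell out more explicitly than the paper why these witnesses persist under later decision steps, which is a fair addition but not a different argument.
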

\begin{proof}
If $i=i_p\in\{i_1,\ldots,i_l\}$ then the shared node $v_p\in S$ was assigned to the clique with index $i$, thus $C_i\neq \es$ because $v_p
\in C_i$.
If $i\in\Rgd$ then $\oC_i$ has a node $v$ with $\cliques(v)=\{i\}$. Node $v$ persists in the $i$-th component of all configurations in $\tT^S(G)$. In particular, $C_i\neq\es$ because $c\in C_i$. 
\end{proof}
\begin{lemma}
\label{gambit}
$\Pi_{mcp}(G)\sse\leaf(\tT^S(G))$.
\end{lemma}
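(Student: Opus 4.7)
The plan is to fix a configuration $\cvr=[C_1,\ldots,C_m]\in\Pi_{\mathit{mcp}}(G)$ and explicitly construct a path of length $s$ in $\tT^S(G)$ from the root $\cvr_0$ to $\cvr$; since by the stated properties of the search space every leaf sits at depth $s$, this will place $\cvr$ in $\leaf(\tT^S(G))$.

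First I would exploit the partition property of $\repr(\cvr)$: since the nonempty components of $\cvr$ partition $V$, every $v\in V$ lies in a unique $C_k$. In particular, for each shared vertex $v_j\in S$ there is a unique index $i_j\in[m]$ with $v_j\in C_{i_j}$, and because $C_{i_j}\sse\oC_{i_j}$ we have $i_j\in\cliques(v_j)$, so $(v_j,i_j)$ is a legal decision step. This determines a path
$$\cvr_0\to_{(v_1,i_1)}\cvr_1\to_{(v_2,i_2)}\cdots\to_{(v_s,i_s)}\cvr_s$$
in $\tT^S(G)$ whose final configuration lies in $\leaf(\tT^S(G))$.

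Next, a straightforward induction on $j$ would show that if $\cvr_j=[C_1^{(j)},\ldots,C_m^{(j)}]$, then
$$C_k^{(j)}=(\oC_k\setminus\{v_1,\ldots,v_j\})\cup\{v_p\mid p\leq j,\ i_p=k\},$$
because the decision step $\to_{(v_j,i_j)}$ deletes $v_j$ from every component except the $i_j$-th, which retains it by construction. Specializing to $j=s$ yields $C_k^{(s)}=\{v\in\oC_k\mid d(v)=1\}\cup\{v_p\in S\mid i_p=k\}$.

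Finally I would compare $\cvr_s$ with $\cvr$ componentwise. Any $v\in\oC_k$ with $d(v)=1$ belongs to no other maximal clique, so the constraint $C_i\sse\oC_i$ forces $v\notin C_i$ for $i\neq k$; since $\bigcup_i C_i=V$, such a vertex must appear in $C_k$. Any shared vertex $v_p$ lies in $C_k$ iff $i_p=k$ by the very definition of $i_p$. Hence $C_k^{(s)}=C_k$ for every $k$, so $\cvr_s=\cvr$ and $\cvr\in\leaf(\tT^S(G))$. The only real obstacle is the bookkeeping in the induction, namely verifying that the claimed formula for each $\cvr_j$ is preserved by $\to_{(v_j,i_j)}$; everything else is a direct use of the partition property. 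Notice in passing that maximality of $\repr(\cvr)$ is never invoked, so the argument actually establishes the stronger inclusion $\Pi_{\mathit{cp}}(G)\sse\leaf(\tT^S(G))$.
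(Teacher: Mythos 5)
Your proposal is correct and follows essentially the same route as the paper: both take the unique index $i_j$ with $v_j\in C_{i_j}$ (the paper's $\kappa(v_j)$) and walk the resulting decision path from $\cvr_0$ to a leaf. You go further than the paper in one useful respect — the paper only asserts that the leaf $\cvr_s$ satisfies $\repr(\cvr_s)=\cP$, whereas your componentwise induction actually verifies $\cvr_s=\cvr$, which is what the set inclusion of configurations literally requires; your closing remark that maximality is never used is also accurate.
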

\begin{proof}
Let $C=[C_1,\ldots,C_m]\in\Pi_{mcp}(G)$.
For every $v\in S$, there is a unique  $\kappa(v)\in [m]$ such that $v\in C_{\kappa(v)}$. Then $\cvr_0\to_{(v_1,\kappa(v_1))}\cvr_1
\to_{(v_2,\kappa(v_2))}\ldots\to_{(v_s,\kappa(v_s))}\cvr_s$ is a valid sequence of decision steps. Moreover, $\cvr_s$ is a final configuration in $\tT^S(G)$ and $\repr(\cvr_s)=\cP$.  
\end{proof}
\begin{corollary}
Every maximal clique-partition is represented by a configuration in $\tT^S(G)$.
\end{corollary}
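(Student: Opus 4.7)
The corollary is an immediate composition of the two preceding lemmas, so the plan is essentially a one-line combination. First I would invoke Lemma~\ref{mcp-cp} to obtain, for an arbitrary maximal clique-partition $\cP$ of $G$, a configuration $\cvr\in\Pi_{\mathit{mcp}}(G)$ with $\repr(\cvr)=\cP$. Then I would apply Lemma~\ref{gambit}, which states $\Pi_{\mathit{mcp}}(G)\sse\leaf(\tT^S(G))$, to conclude that $\cvr$ is in particular a node of the tree $\tT^S(G)$.

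There is no real obstacle here: the content of the statement has already been carried by the two lemmas, and the corollary just packages them. The only thing worth being careful about is the chain of representation: Lemma~\ref{mcp-cp} guarantees the existence of the required configuration among the length-$m$ tuples in $\Pi_{\mathit{mcp}}(G)$, while Lemma~\ref{gambit} shows that every such tuple actually appears as a leaf of $\tT^S(G)$, obtained by making, for each shared node $v_j\in S$, the decision $(v_j,\kappa(v_j))$ dictated by the unique block of $\cP$ containing $v_j$. Composing these two facts yields a configuration in $\tT^S(G)$ whose $\repr$ equals $\cP$, which is exactly the claim.
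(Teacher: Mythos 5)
Your proof is correct and follows exactly the same route as the paper, which also derives the corollary as an immediate consequence of Lemmas~\ref{mcp-cp} and \ref{gambit}. Nothing to add.
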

\begin{proof}
Immediate consequence of Lemmas \ref{mcp-cp} and \ref{gambit}.
\end{proof}
From these preliminary results, we derive the following algorithm to find all clique-partitions of $G$: we traverse systematically (e.g., in a depth-first manner) the search space $\tT^S(G)$, and for every final configuration $\cvr$ in $\tT^S(G)$ we check if $\repr(\cvr)$ is a maximal clique-partition of $G$. This method has the following drawbacks:
\begin{enumerate}
\item The search space can be huge, with many final configurations for non-maximal clique-partitions. For instance, in Example \ref{empi}, the final configurations $\cvr_1$, $\cvr_6$ and $\cvr_6$ represent non-maximal clique partitions.
\item Some maximal clique-partitions may be represented by more than one final configuration. 
\end{enumerate}
We wish to prune the search space as much as possible, to eliminate the computation of configurations for non-maximal clique-partitions, and to ensure the computation of exactly one configuration for every maximal clique-partition. 

\section{Avoiding the computation of nonmaximal clique-partitions}
\label{sect:nonmax}
A final configuration $[C_1,\ldots,C_m]$ does not represent a maximal clique-partition if there exist $1\leq i\neq j\leq m$ such that $C_i\neq\es\neq C_j$ and $C_i\cup C_j$ is a clique. Therefore, it is useful to detect and stop searching below nodes of $\tT^S(G)$ from where we reach only such final configurations.
\begin{definition}
We say that a configuration $\cvr=[C_1,\ldots,C_m]\in\tT^S(G)$ with $\delta(\cvr)=[i_1,\ldots,i_\ell]$ is a {\bf $T1$-node}, or that it has property $T1$, if either
\begin{description}
\item[(a)] $\cliques(C_a)\cap Rgd\neq \es$ for some clique index $a\in\{i_1,\ldots,i_\ell\}-Rgd$, or
\item[(b)] $\cliques(C_a)\cap\cliques(C_b)\neq\es$ for  distinct clique indices $a,b\in \{i_1,\ldots,i_\ell\}$.  
\end{description}
We let $\tT^S_{!(T_1)}(G)$ be the result of pruning from $\tT^S(G)$ all subtrees whose root is a $T1$-node. 
\end{definition}
\begin{proposition}
\label{forbid2}
If $\cvr$ is a $T1$-node of $\tT^S(G)$ then there is no final configuration $[C_1,\ldots,C_m]$ below or equal to $\cvr$ such that $\repr(\cvr)$ is a maximal clique-partition.
\end{proposition}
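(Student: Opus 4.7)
The plan is to fix a $T1$-node $\cvr=[C_1,\ldots,C_m]$ with $\delta(\cvr)=[i_1,\ldots,i_\ell]$ and any descendant (or $\cvr$ itself) $\cvr'=[C'_1,\ldots,C'_m]$ in $\tT^S(G)$, and exhibit two indices $a,b$ for which $C'_a$ and $C'_b$ are nonempty and $C'_a\cup C'_b$ is a clique in $G$, which obstructs $\repr(\cvr')$ from being a maximal clique-partition.

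The first observation I would record is monotonicity along edges of $\tT^S(G)$: since the rule $[C_1,\ldots,C_m]\to_{(v,i)}[C'_1,\ldots,C'_m]$ sets $C'_i=C_i$ and $C'_j=C_j-\{v\}$ for $j\neq i$, every component can only shrink or stay the same when we descend. Hence $C'_k\sse C_k$ for all $k\in[m]$. Combined with Lemma~\ref{core-lemma}, which ensures that $C'_k\neq\es$ for every $k\in\Rgd\cup\{i_1,\ldots,i_\ell\}$, this gives me all the facts I need.

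In case \textbf{(a)}, I pick $a\in\{i_1,\ldots,i_\ell\}-\Rgd$ and $r\in\cliques(C_a)\cap\Rgd$. From $r\in\cliques(C_a)$ I obtain $C_a\sse\oC_r$, hence $C'_a\sse\oC_r$; trivially $C'_r\sse\oC_r$. Both $C'_a$ and $C'_r$ are nonempty (since $a\in\{i_1,\ldots,i_\ell\}$ and $r\in\Rgd$), and $a\neq r$ because $a\notin\Rgd$ while $r\in\Rgd$. Therefore $C'_a\cup C'_r$ is a nonempty subset of the clique $\oC_r$, so it is itself a clique. In case \textbf{(b)}, I choose $k\in\cliques(C_a)\cap\cliques(C_b)$ for the distinct indices $a,b\in\{i_1,\ldots,i_\ell\}$; then $C'_a\sse C_a\sse\oC_k$ and $C'_b\sse C_b\sse\oC_k$ are both nonempty (again by Lemma~\ref{core-lemma}), so $C'_a\cup C'_b\sse\oC_k$ is a clique.

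In either case, if $C'_a=C'_b$ then $\repr(\cvr')$ is not even a partition of $V$, while if $C'_a\neq C'_b$ then $\repr(\cvr')$ contains two distinct cliques whose union is still a clique, violating maximality. Either way $\repr(\cvr')$ fails to be a maximal clique-partition, which is the desired conclusion. I do not foresee a genuine obstacle here: the only care needed is to confirm the monotonicity $C'_k\sse C_k$ under descent and to verify $a\neq r$ in case (a), both of which are immediate from the definitions.
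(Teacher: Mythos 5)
Your proof is correct and follows essentially the same route as the paper: the same case split on conditions (a) and (b), the same appeal to Lemma~\ref{core-lemma} for nonemptiness, and the same construction of two nonempty components whose union lies inside a single maximal clique. The only difference is that you make explicit the monotonicity $C'_k\sse C_k$ along descents (which the paper leaves implicit by reusing the notation $C_a$ for both $\cvr$ and the final configuration), a welcome clarification rather than a divergence.
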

\begin{proof}
Let $[C_1,\ldots,C_m]$ be a final configuration below or equal to $\cvr$ in $\tT^S(G)$.

(a) If $\cliques(C_a)\cap Rgd\neq \es$ for some clique index $a\in\{i_1,\ldots,i_\ell\}-Rgd$ then $C_a\neq\es$ by Lemma \ref{core-lemma}, and there exists 
$b\in Rgd$ such that $C_a\sse \oC_b$. In this case we have: (1) $b\neq a$ because $b\in \Rgd$ and $a\not\in \Rgd$; (b) $C_b\neq\es$ by Lemma \ref{core-lemma}; and (3) $C_a\cup C_b$ is a clique included in $\oC_b$ because $C_a\sse\oC_b$ and $C_b\sse\oC_b$. Therefore, $\cP$ is not a maximal clique-partition.  

(b) If $\cliques(C_a)\cap\cliques(C_b)\neq\es$ for  distinct  $a,b\in \{i_1,\ldots,i_\ell\}$ then
$C_a\neq\es\neq C_b$ by Lemma \ref{core-lemma}, and there exists $p\in \cliques(C_a)\cap\cliques(C_b)$ such that 
$C_a\sse\oC_p$ and $C_b\sse\oC_p$. Thus $C_a\cup C_b\sse \oC_p$, hence $C_a\cup C_b$ is a clique and $\cP$ is not maximal clique-partition.
\end{proof}
\begin{corollary}
$\leaf(\tT^S_{!(T_1)}(G))=\Pi_{mcp}(G)$.
\end{corollary}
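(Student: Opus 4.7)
The plan is to prove the two inclusions separately. Both rest on Proposition~\ref{forbid2} together with the observation that, by the definition of pruning, $\leaf(\tT^S_{!(T_1)}(G))$ consists of exactly those depth-$s$ leaves of $\tT^S(G)$ whose root-to-leaf path contains no $T1$-node.

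For $\Pi_{mcp}(G)\sse\leaf(\tT^S_{!(T_1)}(G))$, I take $\cvr\in\Pi_{mcp}(G)$. By Lemma~\ref{gambit}, $\cvr$ already sits at depth $s$ of $\tT^S(G)$. The contrapositive of Proposition~\ref{forbid2} then shows that no configuration on the unique root-to-$\cvr$ path, including $\cvr$ itself, can be a $T1$-node, for any such node would block every maximal clique-partition from appearing at or below it, contradicting the fact that $\cvr$ itself represents one. Hence the whole path survives pruning and $\cvr\in\leaf(\tT^S_{!(T_1)}(G))$.

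For $\leaf(\tT^S_{!(T_1)}(G))\sse\Pi_{mcp}(G)$, I take a leaf $\cvr=[C_1,\ldots,C_m]$ with $\delta(\cvr)=[i_1,\ldots,i_s]$; being at depth $s$, $\repr(\cvr)$ is already a clique-partition of $G$, so only maximality remains. Suppose, toward a contradiction, that $C_a\cup C_b$ is a clique for some distinct $a,b\in[m]$ with $C_a,C_b\neq\es$, and pick $p\in\cliques(C_a)\cap\cliques(C_b)$. By Lemma~\ref{core-lemma} together with the fact that non-shared vertices lie in unique maximal cliques whose indices belong to $\Rgd$, both $a$ and $b$ must lie in $\Rgd\cup\{i_1,\ldots,i_s\}$. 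A three-case split finishes the job: (i) if $a,b\in\{i_1,\ldots,i_s\}$, clause (b) of the $T1$ definition is triggered; (ii) if exactly one of them, say $a$, is in $\Rgd$, the persistent private vertex of $\oC_a$ forces $\cliques(C_a)=\{a\}$, so $p=a\in\cliques(C_b)\cap\Rgd$ triggers clause (a); (iii) if both $a,b\in\Rgd$, the same rigidity yields $\cliques(C_a)\cap\cliques(C_b)\sse\{a\}\cap\{b\}=\es$, which is absurd. Each case contradicts the survival of $\cvr$ under pruning, so $\repr(\cvr)$ must be maximal.

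The step I expect to require the most care is the mixed case of the analysis, where one has to pin $\cliques(C_a)$ down to the singleton $\{a\}$ by exhibiting a private vertex of $\oC_a$ that survives in $C_a$ through all $s$ decision steps via Lemma~\ref{core-lemma}; the rest of the argument is essentially bookkeeping on $\delta(\cvr)$ and $\Rgd$.
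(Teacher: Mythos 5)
Your proof is correct and follows essentially the same route as the paper: Lemma~\ref{gambit} together with the contrapositive of Proposition~\ref{forbid2} gives $\Pi_{mcp}(G)\sse\leaf(\tT^S_{!(T_1)}(G))$, and the reverse inclusion rests on the fact that every final configuration representing a non-maximal clique-partition is itself a $T1$-node. The paper dismisses that last fact as an ``obvious observation''; your case analysis on membership of the offending indices in $\Rgd$ versus $\{i_1,\ldots,i_s\}$ (using the persistent private vertices of rigid cliques to pin $\cliques(C_a)$ down to $\{a\}$) is a correct and more complete justification of it.
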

\begin{proof}
Immediate consequence of Lemma \ref{gambit}, Proposition \ref{forbid2} and the obvious observation that every $\cvr\in\tT^S(G)$ with $\repr(\cvr)$ non-maximal is not in $\tT^S_{!(T1)}(G)$ because it is a $T1$-node.
\end{proof}
\begin{example}
\label{ex:dumbo}
An enumeration of the maximal cliques of the undirected graph $G$:
\begin{center}
\begin{tikzpicture}[inner sep=1pt,scale=1.3]
\node (1) at (0,0) {\tt 1};
\node (2) at (0.5,.3) {\tt 2};
\node (3) at (0.5,-.3) {\tt 3};
\node (4) at (1,0) {\tt 4};
\node (5) at (1.5,.3) {\tt 5};
\node (6) at (1.5,-.3) {\tt 6};
\node (7) at (2,0) {\tt 7};
\draw (2) -- (1) -- (3) -- (4) -- (2) -- (3) (5) -- (4) -- (6) -- (7) -- (5) -- (6);
\end{tikzpicture}
\end{center}
is $\cvr_0:=[\oC_1,\oC_2,\oC_3,\oC_4]$ where $\oC_1=\{1,2,3\}$, $\oC_2=\{2,3,4\}$, $\oC_3=\{4,5,6\}$, $\oC_4=\{5,6,7\}$. Then $d[1]=d[7]=1$ and $d[v]=2$ for all vertices $v\in\{2,3,4,5,6\}$. Therefore $\Rgd=\{1,4\}$ and we can choose $S=[2,3,4,5,6]$. The tree $\tT^S(G)$ has $\sum_{i=1}^5\prod_{j=1}^i2=62$ non-root configurations and $2^5=32$ final configurations, whereas $\tT^S_{!(T1)}(G)$ has 25 non-root configurations, as shown in Fig. \ref{fig:dt1}. The final configurations in $\tT_{!(T1)}^S(G)$ are $\cvr_1,\cvr_2,\cvr_3,\cvr_4,\cvr_5,\cvr_7,\cvr_9,\cvr_{11}$, and
$$\begin{array}{rr}
\repr(\cvr_1)=\repr(\cvr_5)=\{\{1,2,3\},\{4,5,6\},\{7\}\},&\repr(\cvr_2)=\{\{1,2,3\},\{4,5,6\},\{7\}\},\\
\repr(\cvr_3)=\{\{1,2,3\},\{4,5\},\{6,7\}\},&\repr(\cvr_4)=\{\{1,2,3\},\{4,6\},\{5,7\}\},\\
\repr(\cvr_7)=\{\{1,2\},\{3,4\},\{5,6,7\}\},&\repr(\cvr_9)=\{\{1,3\},\{2,4\},\{5,6,7\}\},\\
\repr(\cvr_{11})=\{\{1\},\{2,3,4\},\{5,6,7\}\}.
\end{array}$$

\begin{figure}[h]
\begin{center}
\begin{tikzpicture}[>=latex]
\node (s0) at (0.7,9) {\scriptsize $[\oC_1,\oC_2,\oC_3,\oC_4]$};
\node (s1) at (-2,8) {\scriptsize $[\oC_1,\{3,4\},\oC_3,\oC_4]$};
\node (s11) at (-4.5,7) {\scriptsize $[\oC_1,\{4\},\oC_3,\oC_4]$};
\node (s112) at (-3.1,6) {\scriptsize $[\oC_1{,}\es{,}\oC_3{,}\oC_4]$};
\node (s1121) at (-3.2,5) {\scriptsize $[\oC_1{,}\es{,}\oC_3{,}\{6,7\}]$};
\node (s1122) at (-1.1,5) {\scriptsize $[\oC_1{,}\es{,}\{4,6\}{,}\oC_4]$};
\node (s11221) at (-1.4,4) {\scriptsize $\cvr_4$};
\node (s11222) at (-.4,4) {\scriptsize $\cvr_5$};
\draw[->] (s1122) -- node[left]{\scriptsize $(6,3)$} (s11221);
\draw[->] (s1122) -- node[right]{\scriptsize $(6,4)$} (s11222);
\node (s11211) at (-3.8,4) {\scriptsize $\cvr_2$};
\node (s11212) at (-2.7,4) {\scriptsize $\cvr_3$};
\draw[->] (s11) -- node[right]{\scriptsize $(4,3)$} (s112);
\draw[->] (s112) -- node[left]{\scriptsize $(5,3)$} (s1121);
\draw[->] (s112) -- node[right]{\scriptsize $(5,4)$} (s1122);
\draw[->] (s1121) -- node[left,near start]{\scriptsize $(6,3)$} (s11211);
\draw[->] (s1121) -- node[right]{\scriptsize $(6,4)$} (s11212);
\node (s111) at (-5.2,6) {\scriptsize $[\oC_1{,}\{4\}{,}\{5,6\}{,}\oC_4]$};
\node (s1112) at (-5.3,5) {\scriptsize $[\oC_1{,}\{4\}{,}\{6\}{,}\oC_4]$};
\node (s11122) at (-5.3,4) {\scriptsize $\cvr_1$};
\draw[->] (s1112) -- node[left,near start]{\scriptsize $(6,4)$} (s11122);
\node (s12) at (.8,7) {\scriptsize $[\{1,2\}{,}\{3,4\}{,}\oC_3{,}\oC_4]$};
\node (s122) at (.8,6) {\scriptsize $[\{1,2\}{,}\{3,4\}{,}\{5,6\}{,}\oC_4]$};
\node (s1222) at (1.3,5) {\scriptsize $[\{1,2\}{,}\{3,4\}{,}\{6\}{,}\oC_4]$};
\node (s12222) at (1.3,4) {\scriptsize $\cvr_7$};
\draw[->] (s1222) -- node[right]{\scriptsize $(6,4)$} (s12222);
\draw[->] (s122) -- node[right]{\scriptsize $(5,4)$} (s1222); 
\draw[->] (s12) -- node[right]{\scriptsize $(4,2)$} (s122); 
\node (s2) at (4.6,8) {\scriptsize $[\{1,3\},\oC_2,\oC_3,\oC_4]$};
\node (s21) at (3.9,7) {\scriptsize $[\{1,3\},\{2,4\},\oC_3,\oC_4]$};
\node  (s212) at (3.9,6) {\scriptsize $[\{1{,}3\}{,}\{2{,}4\}{,}\{5{,}6\},\oC_4]$};
\node (s2122) at (4,5) {\scriptsize $[\{1,3\}{,}\{2,4\}{,}\{6\},\oC_4]$};
\node (s21222) at (4,4) {\scriptsize $\cvr_9$};
\draw[->] (s2122) -- node[right]{\scriptsize $(6,4)$} (s21222);
\draw[->] (s212) -- node[right]{\scriptsize $(5,4)$} (s2122);
\draw[->] (s21) -- node[right]{\scriptsize $(4,2)$}(s212);
\node (s22) at (6.5,7) {\scriptsize $[\{1\}{,}\oC_2{,}\oC_3{,}\oC_4]$};

\node (mak1) at (6.5,6) {\scriptsize $[\{1\}{,}\oC_2{,}\{5{,}6\}{,}\oC_4]$};
\node (mak2) at (6.5,5) {\scriptsize $[\{1\}{,}\oC_2{,}\{6\}{,}\oC_4]$};
\node (mak3) at (6.5,4) {\scriptsize $\cvr_{11}$};
\draw[->] (mak1) -- node[right]{\scriptsize $(5,4)$}(mak2);
\draw[->] (mak2) -- node[right]{\scriptsize $(6,4)$}(mak3);
\draw[->] (s22) -- node[right,near start]{\scriptsize $(4,2)$}(mak1);
\draw[->] (s2) -- node[left]{\scriptsize $(3,1)$}(s21);
\draw[->] (s2) -- node[right]{\scriptsize $(3,2)$}(s22);
\draw[->] (s0) -- node[sloped,above,near end]{\scriptsize $(2,1)$} (s1);
\draw[->] (s1) -- node[sloped,above,near end]{\scriptsize $(3,1)$} (s11);
\draw[->] (s11) -- node[left]{\scriptsize $(4,2)$} (s111);
\draw[->] (s111) -- node[left]{\scriptsize $(5,4)$} (s1112);
\draw[->] (s1) -- node[sloped,above]{\scriptsize $(3,2)$} (s12);
\draw[->] (s0) -- node[sloped,above]{\scriptsize $(2,2)$} (s2);
\end{tikzpicture}
\end{center}
where $\cvr_1=[\oC_1{,}\{4\}{,}\es{,}\oC_4]$, $\cvr_2=[\oC_1{,}\es{,}\oC_3{,}\{7\}]$, $\cvr_3=[\oC_1{,}\es{,}\{4,5\}{,}\{6,7\}]$, $\cvr_4=[\oC_1{,}\es{,}\{4,6\}{,}\{5,7\}]$, $\cvr_5=[\oC_1{,}\es{,}\{4\}{,}\oC_4]$, $\cvr_6=[\{1,2\}{,}\{3\}{,}\oC_3{,}\oC_4]$, $\cvr_7=[\{1,2\}{,}\{3,4\}{,}\es{,}\oC_4]$, $\cvr_8=\{\{1,3\}{,}\{2\}{,}\oC_3,{,}\oC_4\}$, $\cvr_9=[\{1,3\}{,}\{2,4\}{,}\es{,}\oC_4]$, $\cvr_{10}=[\{1\},\{2,3\},\oC_3,\oC_4]$, $\cvr_{11}=[\{1\},\oC_2,\es,\oC_4]$.
\caption{Search tree $\tT^S_1(G)$ for the undirected graph from Example \ref{ex:dumbo}.}
\label{fig:dt1}
\end{figure}
 \end{example}
\section{Avoiding repeated computations of the same maximal clique-partition}
\label{sect:norepeat}
In Example \ref{ex:dumbo},  the final configurations $\cvr_1:=[\oC_1{,}\{4\}{,}\es{,}\oC_4]$ and $\cvr_{5}:=[\oC_1{,}\es{,}\{4\}{,}\oC_4]$ represent the same maximal clique-partition: 
$\repr(\cvr_1)=\repr(\cvr_5)=\{\{1,2,3\},\{4\},\{5,6,7\}\}$.
Thus, there are situations when the search space $\tT^S_{!(T1)}(G)$ has the following undesirable feature: different final configurations represents the same maximal clique-partition. This implies that some computations are redundant: some maximal clique-partitions will be generated more than once.

Note that, in Example \ref{ex:dumbo}, the configurations  $\cvr_1$ and $\cvr_5$ which represent the same maximal clique-partition $\cP=\{\oC_1,\{4\},\oC_4\}$ have the following property: $\cvr_1=[C_1,\ldots,C_m]$, $\cvr_5=[C'_1,\ldots,C'_m]$ and there exist $1\leq j\neq i\leq m$ such that $C_i=C'_j\neq\es$ and $C'_i=\es=C_j$. The following lemma indicates that this is a general property of configurations which represent the same maximal clique-partition:
\begin{lemma}
\label{lema4}
If the distinct final configurations $\cvr=[C_1,\ldots,C_m],\cvr'=[C'_1,\ldots,C'_m]$ in $\tT_{!(T1)}^S(G)$ have  $\repr(\cvr)=\repr(\cvr')$
then there  exist $1\leq j\neq i\leq m$ such that $C_i=C'_j\neq\es$ and $C'_i=C_j=\es$.
\end{lemma}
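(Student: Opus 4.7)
The plan is to locate a pair of indices $(i,j)$ that witnesses the claim by exploiting any mismatch between $\cvr$ and $\cvr'$, and then to force the ``off-diagonal'' entries $C'_i$ and $C_j$ to be empty via an extended disjointness property of non-$T1$ final configurations.

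Before picking $i$ and $j$, I would first prove an auxiliary fact: in any final configuration $[D_1,\ldots,D_m]\in\tT^S_{!(T1)}(G)$, whenever $a\neq b$ and $D_a,D_b\neq\es$, one has $\cliques(D_a)\cap\cliques(D_b)=\es$. By Lemma~\ref{core-lemma}, the nonempty positions of such a final configuration are exactly $\Rgd\cup\{i_1,\ldots,i_s\}$. For any $r\in\Rgd$, the rigid vertex $v$ witnessing $\cliques(v)=\{r\}$ must lie in $D_r$, which forces $\cliques(D_r)=\{r\}$; this settles every pair involving a rigid index, while the remaining pairs $a,b\in\{i_1,\ldots,i_s\}$ are covered directly by clauses (a) and (b) of the $T1$ definition.

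Next I would use $\cvr\neq\cvr'$ to find an index $k$ with $C_k\neq C'_k$; after possibly exchanging the roles of $\cvr$ and $\cvr'$ I may assume $C_k\neq\es$. Setting $C:=C_k\in\cP:=\repr(\cvr)=\repr(\cvr')$, the fact that $\cP$ is a partition implies $C$ occurs in $\cvr'$ at a unique slot $j$, and this slot satisfies $j\neq k$ because $C'_k\neq C_k=C$. Taking $i:=k$ then yields $C_i=C=C'_j\neq\es$ with $i\neq j$.

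To finish I would argue $C'_i=\es$ by contradiction: if $C'_i\neq\es$, then $C'_i$ is some block $D\in\cP$ with $D\neq C$ (because $C$ already occupies slot $j\neq i$ in $\cvr'$), and both $C\sse\oC_i$ and $D\sse\oC_i$ would give $i\in\cliques(C'_j)\cap\cliques(C'_i)$, contradicting the auxiliary disjointness applied to $\cvr'$. A symmetric argument in $\cvr$ yields $C_j=\es$. The main obstacle is the auxiliary disjointness fact, since the literal $T1$ clauses only mention indices in $\delta(\cdot)$; the rigid-vertex analysis sketched above is what extends the disjointness to all nonempty slots, after which the contradiction step is immediate.
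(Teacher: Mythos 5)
Your proof is correct, but it reaches the decisive contradiction by a different route than the paper. The choice of indices is essentially the same: the paper encodes $\repr(\cvr)=\repr(\cvr')$ as a permutation $\pi$ with $C'_k=C_{\pi(k)}$ and sets $j=\pi^{-1}(i)$, which is exactly your ``unique slot of $\cvr'$ holding the block $C_i$''. The difference is in how $C'_i=C_j=\es$ is forced. The paper argues semantically: $C_i\sse\oC_i$ and $C'_i\sse\oC_i$, so a nonempty $C'_i$ would yield two distinct blocks of $\repr(\cvr)$ whose union is a clique, contradicting the maximality of the clique-partition $\repr(\cvr)$ (available from the corollary $\leaf(\tT^S_{!(T1)}(G))=\Pi_{\mathit{mcp}}(G)$). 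You instead prove a purely syntactic property of non-$T1$ leaves --- that $\cliques(D_a)\cap\cliques(D_b)=\es$ for any two distinct nonempty slots --- and contradict that. Both work. The paper's version is shorter because it piggybacks on maximality, already established; yours stays at the level of configurations, and your auxiliary fact is a genuinely stronger, reusable statement: it extends clauses (a) and (b) of the $T1$ definition from the decision indices to all nonempty slots, including the rigid ones, and your case analysis over $\Rgd$ is sound. Two small points. First, Lemma \ref{core-lemma} only gives the inclusion $\Rgd\cup\{i_1,\ldots,i_s\}\sse\{k\mid D_k\neq\es\}$; your auxiliary fact needs the converse inclusion, which is not stated there but follows immediately because every vertex is either rigid or a shared vertex assigned to exactly one slot by depth $s$ --- worth a sentence. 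Second, your WLOG swap of $\cvr$ and $\cvr'$ is legitimate since the conclusion is symmetric under exchanging the two configurations and relabelling $i\leftrightarrow j$; the paper instead asserts that some \emph{nonempty} slot of $\cvr$ already differs from the corresponding slot of $\cvr'$, which itself needs a one-line counting argument, so your handling of this step is arguably cleaner.
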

\begin{proof}
Let $I:=\{k\mid 1\leq k\leq m$ and $C_k\neq\emptyset\}$. Then $\repr(\cvr)=\{C_k\mid k\in I\}$. Moreover,
\begin{itemize}
\item $\repr(\cvr)=\repr(\cvr')$ implies the existence of a permutation $\pi:\{1,\ldots,m\}\to\{1,\ldots,m\}$ such that $C'_k=C_{\pi(k)}$ for all $k\in [m]$, 
\item $\cvr\neq\cvr'$ implies the existence of $i\in I$ such that $C_i\neq C'_i$. This implies $i\neq \pi(i)$. 
\end{itemize}
Let $j=\pi^{-1}(i)$. Then $j\neq i$ and $C_i=C_{\pi(j)}=C'_j$. Since $i\in I$, we have $C_i=C'_j\neq\es$.

Note that $j\neq \pi(j)$ because $j=\pi(j)$ implies $j=\pi(\pi^{-1}(i))=i$, contradiction.


It remains to prove that $C'_i=C_j=\es$. From $\es\neq C_i\sse \oC_i$ and $C_{\pi(i)}=C'_i\sse\oC_i$ we learn that $C_i\cup C_{\pi(i)}$ is a clique included in $\oC_i$. 
We must have $C'_i=C_{\pi(i)}=\es$ because otherwise $C_i$ and $C_{\pi(i)}$ would be different cliques of $\repr(\cvr)$ with $C_j\cup C_{\pi(j)}$ a clique, 
which contradicts the assumption that $\repr(\cvr)$ is maximal clique-partition. 

From $\es\neq C'_j=C_{\pi(j)}\sse\oC_j$ and $C_j\sse\oC_j$ we learn that $C_{\pi(j)}\cup C_j$ is a clique included in $\oC_j$.
We must have $C_j=\es$ because otherwise 
$C_j$ and $C_{\pi(j)}$ would be different cliques of $\repr(\cvr)$ with $C_j\cup C_{\pi(j)}$ a clique, which contradicts the assumption that $\repr(\cvr)$ is maximal clique-partition. 
\end{proof}
Lemma \ref{lema4} allows us to define a criterion to eliminate from $\tT^S_{!(T1)}(G)$ the redundant computations of maximal clique-partitions.

\begin{definition}
We say that configuration $\cvr=[C_1,\ldots,C_m]\in\tT^S_{!(T1)}(G)$ with $\delta(\cvr)=[i_1,\ldots,i_\ell]$ is a {\bf $T2$-node}, or that it has property $T2$, if there exist $1\leq j<i\leq m$ such that $j\in\{i_1,\ldots,i_\ell\}-\Rgd$ and $i\in\cliques(C_j)$. We let $\tT^S_{!(T_1,T_2)}(G)$ be 
be the result of pruning from $\tT^S_{!(T1)}(G)$ all subtrees whose root is a $T2$-node. 
\end{definition}
From now on we write  $\leaf(\tT_{!(T1,T2})^S(G))$ for the set of configurations in $\tT_{!(T1,T2)}^S(G)$ at depth $s$.
If we let $\Pi^!_{\mathit{mcp}}(G)$ be the set of configurations $[C_1,\ldots,C_m]$ from $\Pi_{\mathit{mcp}}(G)$ such that 
\begin{center}
for all $1\leq j<i\leq m$, if $C_j\neq \es$ then $i\not\in\cliques(C_j)$
\end{center}
then the following lemmas hold:
\begin{lemma}
\label{lema6}
For every maximal clique-partition $\cC$ of $G$ there is exactly one configuration $\cvr\in \Pi^!_{\mathit{mcp}}(G)$ with $\repr(\cvr)=\cC$. 
\end{lemma}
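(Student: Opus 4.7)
The plan is to split the statement into existence and uniqueness, and handle each separately.

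For existence, I would start from any configuration $\cvr = [C_1,\ldots,C_m] \in \Pi_{mcp}(G)$ with $\repr(\cvr) = \cC$, which exists by Lemma \ref{mcp-cp}. If $\cvr$ already lies in $\Pi^!_{mcp}(G)$, we are done; otherwise there is a violation, i.e., indices $j < i$ with $C_j \neq \es$ and $i \in \cliques(C_j)$. The first key observation is that such a violation forces $C_i = \es$: otherwise $C_j \sse \oC_i$ and $C_i \sse \oC_i$ would give two distinct cells of the partition $\cC$ whose union is a clique, contradicting the maximality of $\cC$. This allows a swap: define $\cvr' := [C'_1,\ldots,C'_m]$ by $C'_i := C_j$, $C'_j := \es$, and $C'_k := C_k$ otherwise. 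Then $C'_i \sse \oC_i$ (because $i \in \cliques(C_j)$), $\bigcup_k C'_k = V$ is preserved, and $\repr(\cvr') = \repr(\cvr) = \cC$, so $\cvr' \in \Pi_{mcp}(G)$. To argue termination, I would use the potential $\Phi(\cvr) := \sum_{k : C_k \neq \es} k$, which strictly increases by $i - j > 0$ at each swap and is bounded above, so the process ends in a configuration with no violations, i.e., one in $\Pi^!_{mcp}(G)$ representing $\cC$.

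For uniqueness, suppose $\cvr = [C_1,\ldots,C_m]$ and $\cvr' = [C'_1,\ldots,C'_m]$ are distinct elements of $\Pi^!_{mcp}(G)$ both representing $\cC$. By Lemma \ref{lema4}, there exist $a \neq b$ with $C_a = C'_b \neq \es$ and $C_b = C'_a = \es$. Setting $D := C_a = C'_b$, we have $D \sse \oC_a$ and $D \sse \oC_b$, hence $\{a,b\} \sse \cliques(D)$. If $a < b$, then in $\cvr$ the component $C_a = D$ is nonempty and satisfies $b \in \cliques(C_a)$ with $a < b$, violating the defining condition of $\Pi^!_{mcp}(G)$. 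If $b < a$, then symmetrically $\cvr'$ violates the condition at indices $b < a$ since $C'_b = D \neq \es$ and $a \in \cliques(C'_b)$. Either case contradicts the assumption that both configurations are in $\Pi^!_{mcp}(G)$, so $\cvr = \cvr'$.

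The main obstacle is the existence half: one must verify not only that the swap is well-defined (clique inclusion is preserved) and that $\repr(\cvr)$ is unchanged, but also that the iteration terminates, which requires finding a suitable monotone quantity like $\Phi$. The uniqueness half is then a direct application of Lemma \ref{lema4} together with a case split on the ordering of the two indices it produces.
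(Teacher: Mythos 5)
Your proof is correct and follows essentially the same route as the paper: existence by iteratively swapping a violating nonempty component $C_j$ into the empty slot $C_i$ (using maximality of $\cC$ to show $C_i=\es$) with a termination measure, and uniqueness by applying Lemma \ref{lema4} and splitting on the order of the two indices. The only difference is cosmetic: the paper uses the decreasing measure $\sum_{C_i=\es} i$ with well-foundedness, while you use the complementary increasing bounded potential $\sum_{C_k\neq\es} k$; these are equivalent.
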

\begin{proof}
For every  configuration $\cvr=[C_1,\ldots,C_m]\in\leaf(\tT^S_{!(T1)}(G))$ we define the measure $$m(\cvr):=\sum_{\substack{1\leq i\leq m\\
C_i=\es}}i.$$
First, we prove that $\Pi^!_{mcp}(G)$ has at least one configuration whose representation is $\cP$. By Lemma \ref{mcp-cp}, there exists $\cvr'_0=[C_1,\ldots,C_m]\in \Pi_{mcp}(G)$ with $\repr(\cvr)=\cP$.
If $\cvr\in\Pi^!_{mcp}(G)$ then we are done. Otherwise there exists $1\leq j<i\leq m$ such that $\es\neq C_j\sse\oC_i$. Then $C_i=\es$ because otherwise $C_i,C_j$ would be different components of $\cP$ with $C_i\cup C_j\sse \oC_i$ and this contradicts the assumption that the clique-partition $\cP$ is maximal.
Thus, we can define the configuration $\cvr'_1=[C'_1,\ldots,C'_m]\in\Pi_{\mathit{mcp}}(G)$ with
$$C'_k:=\left\{\begin{array}{ll}
C_j&\text{if }k=i,\\
\es&\text{if }k=j,\\
C_k&\text{otherwise}
\end{array}\right.$$
If follows that $m(\cvr'_0)-m(\cvr'_1)=i-j>0$. In this way we can build a sequence of configurations $\cvr'_0,\cvr'_1,\ldots\in\Pi_{mcp}(G)$ with $\cP=\repr(\cvr'_0)=\repr(\cvr'_1)=\ldots$
and $m(\cvr'_0)>m(\cvr'_1)>\ldots$\ 
Since the ordering $>$ on natural numbers is well-founded, this construction will eventually end with a configuration $\cvr'_p\in\Pi_{\mathit{mcp}}^!(G)$ and $\repr(\cvr'_p)=\cP$. Hence $\Pi_{\mathit{mcp}}^!(G)$ has at least one configuration whose representation is $\cP$.

It remains to show that there are no two configurations $\cvr=[C_1,\ldots,C_m], \cvr'=[C'_1,\ldots,C'_m]\in\Pi_{\mathit{mcp}}^!(G)$ with $\repr(\cvr)=\repr(\cvr')$. If this were the case then, by Lemma \ref{lema4}, there exist $1\leq j\neq i\leq m$ such
that $C_i=C'_j\neq\es$ and $C'_i=C_j=\es$.
If $j<i$ then $\es\neq C'_j=C_i\sse\oC_i$, which contradicts the assumption that $\cvr'\in\Pi_{mcp}^!(G)$. If $j>i$ then $\es\neq C_i=C'_j\sse\oC_j$, which contradicts the assumption that $\cvr\in\Pi_{\mathit{mcp}}^!(G)$. 
\end{proof}

\begin{lemma}
\label{lemi3}
$\leaf(\tT^S_{!(T1,T2)}(G))=\Pi_{\mathit{mcp}}^!(G)$.
\end{lemma}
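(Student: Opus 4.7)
The plan is to establish the two inclusions separately, leveraging the previously proved equality $\leaf(\tT^S_{!(T1)}(G)) = \Pi_{mcp}(G)$ and the monotonicity of components along any root-to-leaf path of $\tT^S(G)$.

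For the $\sse$ inclusion, I would take $\cvr = [C_1, \ldots, C_m]$ in $\leaf(\tT^S_{!(T1,T2)}(G))$ with $\delta(\cvr) = [i_1, \ldots, i_s]$. Since the pruned tree sits inside $\tT^S_{!(T1)}(G)$ and both have leaves at depth $s$, $\cvr$ already lies in $\Pi_{mcp}(G)$, so only the condition defining $\Pi^!_{mcp}(G)$ is left to verify: for $j < i$ with $C_j \neq \es$, one needs $i \notin \cliques(C_j)$. The preliminary observation is that a nonempty $C_j$ in a depth-$s$ configuration forces $j \in \Rgd \cup \{i_1, \ldots, i_s\}$, because any $v \in C_j$ is either a rigid vertex (giving $j \in \Rgd$) or equals some $v_p \in S$ that could only have been preserved in $C_j$ by the decision $i_p = j$. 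If $j \in \Rgd$, a surviving witness $w \in \oC_j$ with $\cliques(w) = \{j\}$ forces $\cliques(C_j) \sse \{j\}$, excluding $i > j$. If $j \in \{i_1, \ldots, i_s\} - \Rgd$, the assumption that $\cvr$ is not a $T2$-node directly gives $i \notin \cliques(C_j)$ for every $i > j$.

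For the $\supseteq$ inclusion, I would take $\cvr \in \Pi^!_{mcp}(G)$. Since $\Pi^!_{mcp}(G) \sse \Pi_{mcp}(G) = \leaf(\tT^S_{!(T1)}(G))$, $\cvr$ is already a depth-$s$ node of $\tT^S_{!(T1)}(G)$. What must be shown is that none of the configurations $\cvr_\ell$ along the path from $\cvr_0$ to $\cvr$ is a $T2$-node, so the whole path survives the second pruning. Suppose for contradiction that $\cvr_\ell = [C_1^{(\ell)}, \ldots, C_m^{(\ell)}]$ is a $T2$-node, witnessed by $j < i$ with $j \in \{i_1, \ldots, i_\ell\} - \Rgd$ and $i \in \cliques(C_j^{(\ell)})$. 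Two monotonicity facts close the argument. First, each step $\to_{(v,i_{\ell+1})}$ either leaves $C_j^{(\ell)}$ unchanged (if $i_{\ell+1}=j$) or just deletes $v_{\ell+1}$ from it, so $C_j^{(\ell)} \supseteq C_j$, and hence $\cliques(C_j^{(\ell)}) \sse \cliques(C_j)$ by anti-monotonicity of $\cliques$ in its argument; this yields $i \in \cliques(C_j)$. Second, for the $p \leq \ell$ with $j = i_p$, the decision step installs $v_p$ in the $j$-th component permanently, so $v_p \in C_j$ and $C_j \neq \es$. Together, $j < i$, $C_j \neq \es$, and $i \in \cliques(C_j)$ contradict the defining property of $\Pi^!_{mcp}(G)$.

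The main obstacle I anticipate is the $\supseteq$ direction, specifically pinning down the monotonicity $C_j^{(\ell)} \supseteq C_j$ and the persistence of $v_p$ in $C_j$ throughout all later steps. These are direct consequences of the definition of $\to_{(v,i)}$, but the bookkeeping of which shared vertex was routed to which clique at each step must be handled carefully in order to transfer a $T2$-witness detected at an interior node back to a violation of the $\Pi^!_{mcp}(G)$ condition at the final leaf.
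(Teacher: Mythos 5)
Your proof is correct, and while the $\sse$ direction coincides with the paper's argument, the $\supseteq$ direction takes a genuinely different route. For $\sse$, both you and the paper reduce to $\Pi_{\mathit{mcp}}(G)$ via the $T1$-result, observe that a nonempty $C_j$ at depth $s$ forces $j\in\Rgd\cup\{i_1,\ldots,i_s\}$, rule out $j\in\Rgd$ because then $\cliques(C_j)=\{j\}$, and conclude that any violation of the $\Pi^!_{\mathit{mcp}}$ condition would make the leaf itself a $T2$-node. For $\supseteq$, the paper argues that every leaf in $\Pi_{\mathit{mcp}}(G)-\Pi^!_{\mathit{mcp}}(G)$ is itself a $T2$-node and declares this sufficient; strictly speaking, that only shows the unwanted leaves are pruned and does not by itself exclude that a leaf in $\Pi^!_{\mathit{mcp}}(G)$ is discarded because some \emph{strict ancestor} is a $T2$-node. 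Your argument addresses exactly this point: you assume an interior node $\cvr_\ell$ on the path to $\cvr$ is a $T2$-node with witness $(j,i)$ and push the witness down to the leaf, using the facts that components only shrink along a path (so $C_j\sse C_j^{(\ell)}$, hence $\cliques(C_j^{(\ell)})\sse\cliques(C_j)$ and $i\in\cliques(C_j)$) and that the vertex $v_p$ with $i_p=j$ persists in $C_j$ (so $C_j\neq\es$, which is the content of Lemma \ref{core-lemma}), contradicting $\cvr\in\Pi^!_{\mathit{mcp}}(G)$. What your route buys is a complete justification of the inclusion $\Pi^!_{\mathit{mcp}}(G)\sse\leaf(\tT^S_{!(T1,T2)}(G))$ covering the ancestor case that the paper leaves implicit; the cost is the extra monotonicity and persistence bookkeeping, which you correctly reduce to the definition of $\to_{(v,i)}$.
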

\begin{proof}
First, we prove that $\leaf(\tT^S_{!(T1,T2)}(G))\sse\Pi_{\mathit{mcp}}^!(G)$. Let  $\cvr=[C_1,\ldots,C_m]\in \leaf(\tT^S_{!(T1,T2)}(G))$ with $\delta(\cvr)=[i_1,\ldots,i_s]$. Since $\leaf(\tT^S_{!(T1,T2)}(G))\sse \leaf(\tT^S_{!(T1)}(G))=\Pi_{\mathit{mcp}}(G)$, we only have to show that, for all $1\leq j<i\leq m$, if $C_j\neq\es$ then $i\not\in\cliques(C_j)$. If this were not the case, then there exist $j\in\{i_1,\ldots,i_s\}\cup\Rgd$ and $j< i\leq m$ such that $i\in\cliques(C_j)$. We observe that $j\not\in\Rgd$ because otherwise $\cliques(C_j)=\{j\}$, which contradicts the assumption $i\in \cliques(C_j)$. This implies that $[C_1,\ldots,C_m]$ is $T2$-node of $\tT_{!(T1)}^S(G)$, which contradicts the assumption that $[C_1,\ldots,C_m]\in\leaf(\tT_{!(T1,T2)}^S(G))$.

To finish the proof, we must show that $\Pi_{mcp}^!(G)\sse\leaf(\tT^S_{!(T1,T2)}(G))$. Since $$\Pi_{\mathit{mcp}}^!(G)\sse\Pi_{\mathit{mcp}}(G)=\leaf(\tT^S_{!(T1)}(G)),$$ it is sufficient to prove that every configuration $\cvr\in\Pi_{\mathit{mcp}}(G)-\Pi_{\mathit{mcp}}^!(G)$ is below a $T2$-node of $\tT^S_{!(T1)}(G)$. Let $\cvr=[C_1,\ldots,C_m]$ with $\delta(\cvr)=[i_1,\ldots,i_s]$. Then there exist $1\leq j<i<m$ such that $\es\neq C_j\sse \oC_i$. We observe that $j\not\in \Rgd$ because otherwise the only maximal clique which contains $C_j$ is $\oC_j$. Therefore we must have $j\in\{i_1,\ldots,i_s\}-\Rgd$. This implies that  $\cvr$ is $T2$-node. 
\end{proof}
The following corollary is an immediate consequence of the previous two lemmas.
\begin{corollary}
Every maximal clique-partition is produced by a single configuration from $\leaf(\tT^S_{!(T1,T2)}(G))$.
\end{corollary}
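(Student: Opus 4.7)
The plan is to derive the corollary directly from the two lemmas that precede it, treating existence and uniqueness separately and then combining them through the set equality $\leaf(\tT^S_{!(T1,T2)}(G))=\Pi_{\mathit{mcp}}^!(G)$ supplied by Lemma \ref{lemi3}.

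First I would fix an arbitrary maximal clique-partition $\cP$ of $G$. By Lemma \ref{lema6}, there is exactly one configuration $\cvr\in\Pi_{\mathit{mcp}}^!(G)$ with $\repr(\cvr)=\cP$. Call this configuration $\cvr_\cP$. This step supplies both the existence of a representing configuration in $\Pi_{\mathit{mcp}}^!(G)$ and its uniqueness there.

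Next I would transport this statement across the equality $\Pi_{\mathit{mcp}}^!(G)=\leaf(\tT^S_{!(T1,T2)}(G))$ given by Lemma \ref{lemi3}. Since $\cvr_\cP$ lies in $\Pi_{\mathit{mcp}}^!(G)$, it lies in $\leaf(\tT^S_{!(T1,T2)}(G))$, which proves existence of a leaf representative. For uniqueness, suppose some $\cvr'\in\leaf(\tT^S_{!(T1,T2)}(G))$ also satisfies $\repr(\cvr')=\cP$; then by the same equality $\cvr'\in\Pi_{\mathit{mcp}}^!(G)$, so Lemma \ref{lema6} forces $\cvr'=\cvr_\cP$.

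There is no real technical obstacle here: the two lemmas have been engineered precisely so that their conjunction yields the corollary. The only care needed is to make sure both directions of Lemma \ref{lemi3} are used — the inclusion $\leaf(\tT^S_{!(T1,T2)}(G))\sse\Pi_{\mathit{mcp}}^!(G)$ for the uniqueness step and the reverse inclusion $\Pi_{\mathit{mcp}}^!(G)\sse\leaf(\tT^S_{!(T1,T2)}(G))$ for the existence step — and to avoid the temptation to reprove pieces of Lemma \ref{lema6} along the way. The whole argument can therefore be stated in two or three lines.
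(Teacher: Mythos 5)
Your proposal is correct and follows exactly the paper's route: the paper also derives the corollary as an immediate consequence of Lemma \ref{lema6} (unique representative in $\Pi_{\mathit{mcp}}^!(G)$) combined with the set equality of Lemma \ref{lemi3}. Your version merely spells out the existence and uniqueness directions that the paper leaves implicit.
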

\begin{example}
The tree $\tT^S_{!(T1,T2)}(G)$ for the graph $G$ from Example \ref{ex:dumbo} is shown in Figure \ref{fig:dt2}. \\
$[\oC_1{,}\{4\}{,}\{5,6\}{,}\oC_4]\in\tT^S_{!(T1)}(G)$ is a $T2$-node because it is of the form $[C_1,C_2,C_3,C_4]$ and there exist $j=2<3=i$ such that $j\in\{2,3\}-\Rgd$ and $C_2\sse\oC_3$.
Compared to $\tT^S_{!(T1)}(G)$, the total number of non-root nodes in $\tT^S_{!(T1,T2)}(G)$ has dropped from 25 to 22, and $$\leaf(\tT^S_{!(T1,T2)}(G))=\Pi^!_{\mathit{mcp}}(G)=\{\cvr_2, \cvr_3, \cvr_4, \cvr_5, \cvr_{7}, \cvr_{9}, \cvr_{11}\}.$$ Every maximal clique-partition is produced by a single final configuration in $\tT^S_{!(T1,T2)}(G))$:
$$\begin{array}{rr}
\repr(\cvr_2)=\{\{1,2,3\},\{4,5,6\},\{7\}\},&\repr(\cvr_3)=\{\{1,2,3\},\{4,5\},\{6,7\}\},\\
\repr(\cvr_4)=\{\{1,2,3\},\{4,6\},\{5,7\}\},&\repr(\cvr_5)=\{\{1,2,3\},\{4,5,6\},\{7\}\},\\
\repr(\cvr_7)=\{\{1,2\},\{3,4\},\{5,6,7\}\},&\repr(\cvr_9)=\{\{1,3\},\{2,4\},\{5,6,7\}\},\\
\repr(\cvr_{11})=\{\{1\},\{2,3,4\},\{5,6,7\}\}.
\end{array}$$
\end{example}

\begin{figure}[ht]
\begin{center}
\begin{tikzpicture}[>=latex,scale=1.1]
\node (s0) at (1,9) {\scriptsize $[\oC_1,\oC_2,\oC_3,\oC_4]$};
\node (s1) at (-1.5,8) {\scriptsize $[\oC_1,\{3,4\},\oC_3,\oC_4]$};
\node (s11) at (-2.6,7) {\scriptsize $[\oC_1,\{4\},\oC_3,\oC_4]$};
\node (s112) at (-2.6,6) {\scriptsize $[\oC_1{,}\es{,}\oC_3{,}\oC_4]$};
\node (s1121) at (-3.6,5) {\scriptsize $[\oC_1{,}\es{,}\oC_3{,}\{6,7\}]$};
\node (s1122) at (-1.4,5) {\scriptsize $[\oC_1{,}\es{,}\{4,6\}{,}\oC_4]$};
\node (s11221) at (-1.6,4) {\scriptsize $\cvr_4$};
\node (s11222) at (-.6,4) {\scriptsize $\cvr_5$};
\draw[->] (s1122) -- node[left,near start]{\scriptsize $(6,3)$} (s11221);
\draw[->] (s1122) -- node[right,near start]{\scriptsize $(6,4)$} (s11222);
\node (s11211) at (-4,4) {\scriptsize $\cvr_2$};
\node (s11212) at (-3.2,4) {\scriptsize $\cvr_3$};
\draw[->] (s11) -- node[right]{\scriptsize $(4,3)$} (s112);
\draw[->] (s112) -- node[left]{\scriptsize $(5,3)$} (s1121);
\draw[->] (s112) -- node[right]{\scriptsize $(5,4)$} (s1122);
\draw[->] (s1121) -- node[left,near start]{\scriptsize $(6,3)$} (s11211);
\draw[->] (s1121) -- node[right,near start]{\scriptsize $(6,4)$} (s11212);
\node (s12) at (.9,7) {\scriptsize $[\{1,2\},\{3,4\},\oC_3,\oC_4]$};
\node (s122) at (.9,6) {\scriptsize $[\{1,2\}{,}\{3,4\}{,}\{5,6\}{,}\oC_4]$};
\node (s1222) at (.9,5) {\scriptsize $[\{1,2\}{,}\{3,4\}{,}\{6\}{,}\oC_4]$};
\node (s12222) at (.9,4) {\scriptsize $\cvr_7$};
\draw[->] (s1222) -- node[right,near start]{\scriptsize $(6,4)$} (s12222);
\draw[->] (s122) -- node[right,near start]{\scriptsize $(5,4)$} (s1222); 
\draw[->] (s12) -- node[right,near start]{\scriptsize $(4,2)$} (s122); 
\node (s2) at (4,8) {\scriptsize $[\{1,3\},\oC_2,\oC_3,\oC_4]$};
\node (s21) at (3.8,7) {\scriptsize $[\{1,3\},\{2,4\},\oC_3,\oC_4]$};
\node (s212) at (3.8,6) {\scriptsize $[\{1,3\}{,}\{2,4\}{,}\{5,6\},\oC_4]$};
\node (s2122) at (3.8,4.9) {\scriptsize $[\{1,3\}{,}\{2,4\}{,}\{6\},\oC_4]$};
\node (s21222) at (3.8,4) {\scriptsize $\cvr_9$};
\draw[->] (s2122) -- node[left,near start]{\scriptsize $(6,4)$} (s21222);
\draw[->] (s212) -- node[left,near start]{\scriptsize $(5,4)$} (s2122);
\draw[->] (s21) -- node[right]{\scriptsize $(4,2)$}(s212);
\node (s22) at (6.3,7) {\scriptsize $[\{1\}{,}\oC_2{,}\oC_3{,}\oC_4]$};
\node (s221) at (6.3,6) {\scriptsize $[\{1\}{,}\oC_2{,}\{5,6\}{,}\oC_4]$};
\node (s2211) at (6.3,4.9) {\scriptsize $[\{1\}{,}\oC_2{,}\{6\}{,}\oC_4]$};
\node (y) at (6.3,4) {\scriptsize $\cvr_{11}$};
\draw[->] (s22) -- node[left]{\scriptsize $(4,2)$} (s221);
\draw[->] (s221) -- node[left,near start]{\scriptsize $(5,4)$} (s2211);
\draw[->] (s2211) -- node[left,near start]{\scriptsize $(6,4)$} (y);
\draw[->] (s2) -- node[left]{\scriptsize $(3,1)$}(s21);
\draw[->] (s2) -- node[right]{\scriptsize $(3,2)$}(s22);
\draw[->] (s0) -- node[sloped,above]{\scriptsize $(2,1)$} (s1);
\draw[->] (s1) -- node[left,near start]{\scriptsize $(3,1)$} (s11);
\draw[->] (s1) -- node[right,near start]{\scriptsize $(3,2)$} (s12);
\draw[->] (s0) -- node[sloped,above]{\scriptsize $(2,2)$} (s2);
\end{tikzpicture}
\end{center}
The configurations $\cvr_i$ for $2\leq i\leq 11$ are the same as those from Figure \ref{fig:dt1}.
\caption{The search tree $\tT^S_{!(T1,T2)}(G)$ for the graph from Example \ref{ex:dumbo}.}
\label{fig:dt2}
\end{figure}

\section{The combined detection of $T1$-nodes and $T2$-nodes}
\label{sect:combo}
Let $\cvr \to_{(v_\ell,i_\ell)} \cvr'$ be a branch of $\tT^S_{!(T1,T2)}(G)$, and $\delta(\cvr')=[i_1,\ldots,i_\ell]$. 
The only visible change from $\cvr$ to  $\cvr'$ is that of the components with indices from the set $\cliques(v_\ell)-\{i_\ell\}$.
When we are about to check if $\cvr'$ has property $T1$ or $T2$, we know that $\cvr$ does not have property $T1$ or $T2$ because it has already passed this test.
Therefore, it is sufficient to consider only the set of clique indices $\cJ_\ell:=\cliques(v_\ell)\cap\cI_\ell$ where $\cI_\ell:=\{i_1,\ldots,i_\ell\}-Rgd$, and to check if $\cvr'$ is a configuration $[C_1,\ldots,C_m]$ which satisfies one of the following conditions for some $j\in\cJ_\ell$: 
\begin{enumerate}
 \item (a) $C_j\sse\oC_i$ for an $i\in Rgd$ (in this case $\cvr'$ is $T1$-node), or (b) $i>j$ (in this case $\cvr'$ is $T2$-node); or
 \item there exists $i\in\cI_\ell-\{j\}$ such that $C_i\cup C_j$ is a clique. In this case $\cvr'$ is $T1$-node.
\end{enumerate}
Condition 1.(a) is equivalent with $\cliques(C_j)\cap\Rgd\neq\es$, and condition 2 is equivalent with $\cliques(C_i)\cap\cliques(C_j)\neq\es$.

\section{Enumerating all maximal clique-partitions}
\label{sect:algo}
In this section we describe an algorithm to compute the maximal clique-partitions one-by-one, on request.
The following global data is assumed to be available:
\begin{itemize}
\item $\oC_1,\ldots,\oC_m$: the maximal cliques of  $G$
\item $\cliques(v)=\{k\in[m]\mid v\in \oC_k\}$ for all $v\in V$
\item $\Rgd=\{k\mid \cliques(v)=\{k\}$ for some $v\in V\}$
\item $S=[v_1,\ldots,v_s]$: an enumeration of all vertices $v\in V$ with $d(v)>1$
\end{itemize}
During the computation we will keep track of the following information:
\begin{itemize}
\item $\ell$: the depth of the search in tree $\tT_{!(T1,T2)}^S(G)$
\item $\cvr[\ell]$: the current configuration of the search in tree $\tT_{!(T1,T2)}^S(G)$
\item $choice[i]$ for $1\leq i\leq \ell$: the index of the clique where vertex $v_i\in S$ is assigned. If $choice[i]==0$ then vertex $v_i$ was not yet been assigned to any clique. 
\end{itemize}

\begin{tabbing}
{\sc pr}\={\sc ocedure} {\tt initSearch()}\\
\>$\cvr[1]:=[\oC_1,\ldots,\oC_m]$;\\
\>$\ell:=1$;\\
\>{\bf for} \=$i:=1$ to $s$ {\bf do} \\
\>\>$choice[i]:=0$;\\
\>{\tt findNextClique()};\\ \\
{\sc pr}\={\sc ocedure} {\tt hasNext()}\\
\>{\bf return} $\cvr[1]\neq\tt null$\\
\\
{\sc pr}\={\sc ocedure} {\tt next()}\\
\>{\bf if} \=($\cvr[1]==\tt null$)
 {\bf return} {\tt null};\\
\>{\bf else}\\
\>\> $result:=\repr(\cvr[\ell])$;\\
\>\> {\tt findNextClique()};\\
\>\> {\bf return} $result$;\\ \\
{\sc pr}\={\sc ocedure} {\tt findNextClique()} \\
\>{\bf if} \=$s>0$\\
\>\>{\bf wh}\={\bf ile} $\ell\geq 1$\\
\>\>\>$V_\ell:=\{k\in\cliques(v_\ell)\mid$\,\=$k>choice[\ell]$ and $(k\in\Rgd\text{ or }\cliques(C_k)\cap\Rgd=\es)\}$;\\
\>\>\>{\bf if} \=$V_\ell$ is empty\\
\>\>\>\>$choice[\ell]:=0$;\\
\>\>\>\>$\ell:=\ell-1$;\\
\>\>\>{\bf else}\\
\>\>\>\>$i:=\min V_\ell$; \\
\>\>\>\>// keep $v_\ell$ only in clique with index $i$\\
\>\>\>\>$choice[\ell]:=i$;\\
\>\>\>\>$\cvr[\ell]:=[C'_1,\ldots,C'_m]$ where $C'_k:=\left\{\begin{array}{ll}
C_k-\{v_\ell\}&\text{if }k\in\cliques(v_\ell)-\{i\},\\
C_k&\text{otherwise}
\end{array}\right.$ \\
\>\>\>\>{\bf if}\=\ ({\tt isT1OrT2($\cvr[\ell]$)})
{\bf continue};\\
\>\>\>\>{\bf el}\={\bf se}\\
\>\>\>\>\>{\bf if} \=$(\ell==s)$\\
\>\>\>\>\>\> {\bf return};\qquad\qquad  // maximal clique-partition detected\\
\>\>\>\>\>{\bf else}\\
\>\>\>\>\>\>$\cvr[\ell+1]:=\cvr[\ell]$;\\
\>\>\>\>\>\>$\ell:=\ell+1$;\\
\>$\cvr[1]:=\tt null$;\\ \\
{\sc pr}\={\sc ocedure} {\tt isT1orT2($[C_1,\ldots,C_m]$)}\\
\>$I:=\{choice[k]\mid 1\leq k\leq \ell\}-\Rgd$;\\
\>$J:= I \cap\cliques(v_\ell)$;\\
\>{\bf for} all $j\in J$\\
\>\>{\bf if} $\cliques(C_j)\cap\Rgd\neq\es$ or $\cliques(C_j)\cap\{i\mid j<i\leq m\}\neq\es$\\
\>\>\>{\bf return} {\tt true};\\
\>\>{\bf for} all $i\in I$\\
\>\>\>{\bf if} \=$i\in J$\\
\>\>\>\>{\bf if} $(j<i)$ and $\cliques(C_i)\cap\cliques(C_j)\neq\es$\\
\>\>\>\>\>{\bf return} {\tt true};\\
\>\>\>{\bf else if} $\cliques(C_i)\cap\cliques(C_j)\neq\es$\\
\>\>\>\>{\bf return} {\tt true};\\
\>{\bf return} {\tt false};
\end{tabbing}


\section{Complexity}
\label{sect:complexity}
\begin{theorem}
	If the set of vertices of $G$ is $\{v_1,\ldots, v_n \}$ then the number of maximal clique-partitions of $G$ is at most $ \prod_{i=1}^{n} d(v_i)$.  
\end{theorem}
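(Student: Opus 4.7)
The plan is to exhibit an injection from the set of maximal clique-partitions of $G$ into a set of size $\prod_{i=1}^{n} d(v_i)$, namely the Cartesian product $\prod_{i=1}^n \cliques(v_i)$.

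First I would invoke Lemma \ref{lema6}: every maximal clique-partition $\cP$ of $G$ is represented by exactly one configuration $\cvr_\cP = [C_1,\ldots,C_m] \in \Pi^!_{\mathit{mcp}}(G)$ with $\repr(\cvr_\cP) = \cP$. Since $\repr(\cvr_\cP)$ is a partition of $V$, for every vertex $v \in V$ there is a unique index $\kappa_\cP(v) \in [m]$ such that $v \in C_{\kappa_\cP(v)}$. Moreover, because $C_k \sse \oC_k$ for every $k \in [m]$, we have $v \in \oC_{\kappa_\cP(v)}$, so $\kappa_\cP(v) \in \cliques(v)$. This yields a function $\kappa_\cP \in \prod_{i=1}^n \cliques(v_i)$.

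Next I would show that the assignment $\cP \mapsto \kappa_\cP$ is injective. Suppose $\cP, \cP'$ are maximal clique-partitions with $\kappa_\cP = \kappa_{\cP'}$. Writing $\cvr_\cP = [C_1,\ldots,C_m]$ and $\cvr_{\cP'} = [C'_1,\ldots,C'_m]$, the component $C_k$ (resp. $C'_k$) is exactly the set $\{v \in V \mid \kappa_\cP(v) = k\}$ (resp. $\{v \in V \mid \kappa_{\cP'}(v) = k\}$). Hence $\cvr_\cP = \cvr_{\cP'}$, and consequently $\cP = \repr(\cvr_\cP) = \repr(\cvr_{\cP'}) = \cP'$. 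It then follows that the number of maximal clique-partitions is bounded by $\bigl|\prod_{i=1}^n \cliques(v_i)\bigr| = \prod_{i=1}^n d(v_i)$.

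The work is not technically demanding once Lemma \ref{lema6} is available: the only point that requires care is that the function $\kappa_\cP$ fully determines the configuration $\cvr_\cP$, which is what forces injectivity. Without the canonical representatives supplied by $\Pi^!_{\mathit{mcp}}(G)$, one could not associate a single $\kappa_\cP$ to $\cP$ in a well-defined way, because distinct configurations in $\Pi_{\mathit{mcp}}(G)$ can represent the same maximal clique-partition (as illustrated by $\cvr_1$ and $\cvr_5$ in Example \ref{ex:dumbo}). This is the main subtlety, and it is already handled by the earlier normalization machinery.
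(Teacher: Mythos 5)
Your proof is correct and takes essentially the same route as the paper: the paper bounds the number of maximal clique-partitions by the $\prod_{i=1}^{n} d(v_i)$ final configurations of $\tT^S(G)$, and your injection $\cP\mapsto\kappa_\cP$ into $\prod_{i=1}^{n}\cliques(v_i)$ is just an explicit set-theoretic phrasing of that same leaf count. One small remark: the uniqueness part of Lemma~\ref{lema6} is not actually needed (and your closing claim that the map would otherwise not be well-defined is too strong) --- Lemma~\ref{mcp-cp} already supplies each $\cP$ with some representing configuration, and any fixed choice of representative is recovered from $\kappa_\cP$ componentwise, so injectivity holds without the canonical representatives.
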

\begin{proof}
The result directly follows from the facts that (1) every maximal clique-partition is produced by some final configuration of $\tT^S(G)$, and (2)
$\tT^S(G)$ has $\prod_{i=1}^{n} d(v_i)$ final configurations.
\end{proof}

It is easy to see that this upper bound can be reached. Just consider the graph with two maximal cliques: $C_1=\{p_1,\ldots,p_n, \mathit{true} \}$ and $C_2=\{p_1,\ldots,p_n, \mathit{false} \}$. The set of all maximal clique-partitions imitates the truth assignment in propositional logic, containing $2^n$ maximal clique-partitions.

This theorem implies that the algorithm is exponential in the number of vertices shared among multiple cliques. On the other hand, the length of each branch of the algorithm is polynomially bounded, since it requires at most as many steps as there are vertices shared among maximal cliques. Therefore, every single maximal clique-partition can be computed in polynomial time.
\paragraph*{Experimental results}
To test the performance of our algorithm, we implemented it in Java and Mathematica~\cite{DBLP:books/daglib/0011324}, and ran it on a MacBook Air M2
with 8-core CPU and 8 GB  RAM. We indicate the runtimes to enumerate all maximal clique-partitions of some graphs from the following families:
\begin{enumerate}
\item $G_n$, $n\geq 2$, obtained by extending the complete graph $K_n$ with $n$ new vertices, and connecting every vertex of $K_n$ with a distinct new vertex.
Examples of graphs $G_n$ are:
\begin{center}
\begin{tikzpicture}[inner sep=0pt,scale=1.3]
\node (g0) at (0,0.3) {\scriptsize $\bullet$};
\node (g1) at (0,-.3) {\scriptsize $\bullet$};
\node (g2) at (0,.9) {\scriptsize $\bullet$};
\node (g3) at (0,-.9) {\scriptsize $\bullet$};
\draw[ultra thick] (g0) -- (g1);
\draw (g0) -- (g2);
\draw (g1) -- (g3);
\node[left] at (-.25,0) {$G_2:$};
\end{tikzpicture}\quad
\begin{tikzpicture}[scale=1.4,inner sep=0pt]
\node[left] at (-.5,0) {$G_3:$};
\node (0) at (.4,0) {\scriptsize $\bullet$};
\node (1) at (-.2,0.346) {\scriptsize $\bullet$};
\node (2) at (-.2,-.346) {\scriptsize $\bullet$};
\node (3) at (.9,0) {\scriptsize $\bullet$};
\node (4) at (-.45,0.78) {\scriptsize $\bullet$};
\node (5) at (-.45,-.78) {\scriptsize $\bullet$};
\draw[ultra thick] (0) -- (1) -- (2) -- (0);
\draw (0) -- (3);
\draw (1) -- (4);
\draw (2) -- (5);
\end{tikzpicture}\quad
\begin{tikzpicture}[inner sep=0pt,scale=1.3]
\node[left] at (-1,0) {$G_4:$};
\node (0) at (.4,.4) {\scriptsize $\bullet$};
\node (1) at (.4,-.4) {\scriptsize $\bullet$};
\node (2) at (-.4,-.4) {\scriptsize $\bullet$};
\node (3) at (-.4,.4) {\scriptsize $\bullet$};
\node  (4) at (.9,.9) {\scriptsize $\bullet$};
\node  (5) at (.9,-.9) {\scriptsize $\bullet$};
\node (6) at (-.9,-.9) {\scriptsize $\bullet$};
\node (7) at (-.9,.9) {\scriptsize $\bullet$};
\draw[ultra thick] (0) -- (1) -- (2) -- (3) -- (0) -- (2) (1) -- (3);
\draw (0) -- (4);
\draw (1) -- (5);
\draw (2) -- (6);
\draw (3) -- (7);
\end{tikzpicture}\quad
\begin{tikzpicture}[inner sep=0pt,scale=1.3]
\node[left] at (-1.2,0) {$G_5:$};
\node (0) at (-.6,0) {\scriptsize $\bullet$};
\node (1) at (-.185,-.57) {\scriptsize $\bullet$};
\node (2) at (.485,-.353) {\scriptsize $\bullet$};
\node (3) at (.485,0.353) {\scriptsize $\bullet$};
\node (4) at (-.185,.57) {\scriptsize $\bullet$};
\node (x0) at (-1,0) {\scriptsize $\bullet$};
\node (x1) at (-.309,-.95) {\scriptsize $\bullet$};
\node (x2) at (.81,-.588) {\scriptsize $\bullet$};
\node (x3) at (.81,0.588) {\scriptsize $\bullet$};
\node (x4) at (-.309,.95) {\scriptsize $\bullet$};
\draw (0) -- (x0) (1) -- (x1) (2) -- (x2) (3) -- (x3) (4) -- (x4);
\draw[ultra thick] (0) -- (1) -- (2) -- (3) -- (4) -- (0) -- (2) -- (4) -- (1) -- (3) -- (0);
\end{tikzpicture}
\end{center}

\begin{center}
\begin{tabular}{|c|c|r|r|r|} \hline
&&Number of& Number of maximal&\\
Graph&Order &vertices in $S$& clique-partitions&Runtime\\ \hline
$G_4$&8&4 &12&0.002 \\
$G_5$&10&5 &27&0.003 \\
$G_6$&12&6 &58&0.005 \\
$G_7$&14&7 &121&0.007 \\
$G_8$&16&8 &248&0.012 \\
$G_{10}$&20&10 &1014&0.270 \\
$G_{20}$&40&20 &1048556&2.462 \\ 
$G_{25}$&50&25 &33554407&61.706 \\ 
\hline
\end{tabular}
\end{center}
\item $H_n$ of order $4\,n$ ($n\geq 2$), with three maximal cliques of order $2\,n$: two  are mutually disjoint, and the third one shares $n$ vertices with each of the other two maximal cliques.
For example:
\begin{center}
\begin{tikzpicture}[inner sep=0pt,baseline=0pt,scale=1.3]
\node (0) at (-.5,.5) {\scriptsize $\bullet$};
\node (1) at (.5,.5) {\scriptsize 
$\bullet$};
\node (2) at (1.5,.5) {\scriptsize 
$\bullet$};
\node (3) at (2.5,.5) {\scriptsize 
$\bullet$};
\node (4) at (-.5,-.5) {\scriptsize $\bullet$};
\node (5) at (.5,-.5) {\scriptsize 
$\bullet$};
\node (6) at (1.5,-.5) {\scriptsize 
$\bullet$};
\node (7) at (2.5,-.5) {\scriptsize 
$\bullet$};
\draw (0) -- (1) -- (2) -- (3) -- (7) -- (6) -- (5) -- (4) -- (0) (1) -- (5) (2) -- (6) (3) -- (7) (0) -- (5) --(2) -- (7) (4) -- (1) -- (6) -- (3);
\node[left] at (-.7,0) {$H_2:$};
\end{tikzpicture}\quad
\begin{tikzpicture}[inner sep=0pt,baseline=0pt,scale=.9]
\node (0) at (-1,0) {\scriptsize $\bullet$};
\node (1) at (-1.5,.866) {\scriptsize $\bullet$};
\node (2) at (-2.5,.866) {\scriptsize $\bullet$};
\node (3) at (-3,0) {\scriptsize $\bullet$};
\node (4) at (-2.5,.-.866) {\scriptsize $\bullet$};
\node (5) at (-1.5,-.866) {\scriptsize $\bullet$};
\draw (0) -- (1) -- (2) -- (3) -- (4) -- (5) -- 
      (0) -- (2) -- (4) -- (0) (1) -- (3) -- (5) -- (1)
      (0) -- (3) (1) -- (4) (2) -- (5);
\node (x0) at (.5,0) {\scriptsize $\bullet$};
\node (x1) at (1,.866) {\scriptsize $\bullet$};
\node (x5) at (1,-.866) {\scriptsize $\bullet$};
\node (x2) at (2,.866) {\scriptsize $\bullet$};
\node (x3) at (2.5,0) {\scriptsize $\bullet$};
\node (x4) at (2,.-.866) {\scriptsize $\bullet$};
\draw (x0) -- (x1) -- (x2) -- (x3) -- (x4) -- (x5) -- 
      (x0) -- (x2) -- (x4) -- (x0) (x1) -- (x3) -- (x5) -- (x1)
      (x0) -- (x3) (x1) -- (x4) (x2) -- (x5)
      (0) -- (x0) (1) -- (x1) (5) -- (x5) 
      (0) -- (x5) (5) -- (x0)
      (0) -- (x1) (1) -- (x0) 
      (1) -- (x5) (5) -- (x1);
\node[left] at (-3.2,0) {$H_3:$};  
\end{tikzpicture}\\[.5em]
\begin{tikzpicture}[inner sep=1pt,baseline=0pt,scale=1.1]
\node[left=2pt] at (-1.6,0) {$H_4$:};
\node (0) at (-1.5,.3) {\small\tt 0};
\node (1) at (-1.5,-.3) {\small\tt 1};
\node (2) at (-.5,0.4) {\small\tt 2};
\node (3) at (-.5,-.4) {\small\tt 3};
\node (x0) at (1.5,.4) {\small\tt 8};
\node (x1) at (1.5,-.4) {\small\tt 9};
\node (x2) at (2.6,0.3) {\small\tt 10};
\node (x3) at (2.6,-.3) {\small\tt 11};
\node (x12) at (3,-1.4) {\small\tt 12}; 
\node (x13) at (3.7,-1.4) {\small\tt 13}; 
\node (x14) at (3.7,1.4) {\small\tt 14}; 
\node (x15) at (3,1.4) {\small\tt 15}; 
\node (4) at (0,.-1.2) {\small\tt 4};
\node (5) at (1,-1.2) {\small\tt 5};
\node (6) at (1,1.2) {\small\tt 6};
\node (7) at (0,1.2) {\small\tt 7};
\draw (x0) -- (x15) -- (x1) -- (x14) -- (x0) -- (x13) -- (x1) -- (x12) -- (x0);
\draw (x2) -- (x15) -- (x3) -- (x14) -- (x2) -- (x13) -- (x3) -- (x12) -- (x2);
\draw (x12) -- (x13) -- (x14) -- (x15) -- (x12) -- (x14) (x13) -- (x15);
\draw (x0) -- (7) -- (x1) -- (6) -- (x0) -- (4) -- (x1) -- (5) -- (x0) -- (2) -- (x1) -- (3) -- (x0) -- (x1)
(x2) -- (6) -- (x3) -- (x0) -- (x2) -- (x1) -- (x3) -- (5) -- (x2) -- (x3);
\draw (0) -- (1) -- (2) -- (3) -- (0) -- (2) (1) -- (3) -- (4) -- (5) -- (6) -- (7) -- (4) -- (6) (x12) -- (5) -- (7) -- (2) (6) -- (0) -- (7) (5) -- (1) -- (4) -- (2) (6) -- (2) -- (5) -- (3) -- (6) (5) .. controls (-.5,0) .. (0) -- (4) (3) -- (7) -- (1) .. controls (-.5,0) .. (6) -- (x15);
\draw (6) .. controls (2.1,0) .. (x12) (5)  .. controls (2.1,0) .. (x15);
\end{tikzpicture}
\end{center}

\begin{center}
\begin{tabular}{|c|c|r|r|r|} \hline
&&Number of& Number of maximal&\\
Graph&Order &vertices in $S$& clique-partitions&Runtime\\ \hline
$H_4$&16&8 &226&0.018\\
$H_5$&20&10 &962&0.028\\
$H_6$&24&12 &3970&0.057 \\
$H_7$&28&14 &16130&0.122\\
$H_8$&32&16 &65026&0.241\\
$H_9$&36&18 &261122&0.723\\
$H_{10}$&40&20 &1046530&2.280\\
$H_{11}$&44&22 &2094082&3.944\\ \hline
\end{tabular}
\end{center}

 \item Graphs $G_{m,n}$ with set of vertices $V=\{v_1,v_2,\ldots,v_{m+n-1}\}$ and $n$ maximal cliques $\oC_1,\ldots,\oC_n$ such that $\oC_i=\{v_j\mid i\leq j<i+m\}$ for all $1\leq i\leq n$. 
Examples of graphs $G_{m,n}$ are:
\begin{center}
\begin{tikzpicture}[inner sep=1.5pt,baseline=0pt]
\node (0) at (-1,.5) {\small\tt 1};
\node (1) at (-1,-.5) {\small\tt 0};
\node (2) at (0,-.5) {\small\tt 2};
\node[left=2pt] at (-1.2,0) {$G_{3,2}$:};
\node (3) at (0,.5) {\small\tt 3};
\draw (2) -- (1) -- (0) -- (2) -- (3) -- (0);
\end{tikzpicture}\quad
\begin{tikzpicture}[inner sep=1.5pt,baseline=0pt]
\node (0) at (-1,.5) {\small\tt 1};
\node (1) at (-1,-.5) {\small\tt 0};
\node (2) at (0,-.5) {\small\tt 2};
\node[left=2pt] at (-1.2,0) {$G_{3,3}$:};
\node (3) at (0,.5) {\small\tt 3};
\node (4) at (1,-.5) {\small\tt 4};
\draw (2) -- (1) -- (0) -- (2) -- (3) -- (0) (2) -- (3) -- (4) --(2);
\end{tikzpicture}\quad
\begin{tikzpicture}[inner sep=1.5pt,baseline=0pt]
\node (0) at (-1,.5) {\small\tt 1};
\node (1) at (-1,-.5) {\small\tt 0};
\node (2) at (0,-.5) {\small\tt 2};
\node[left=2pt] at (-1.2,0) {$G_{3,4}$:};
\node (3) at (0,.5) {\small\tt 3};
\node (5) at (1,.5) {\small\tt 5};
\node (4) at (1,-.5) {\small\tt 4};
\draw (2) -- (1) -- (0) -- (2) -- (3) -- (0) (2) -- (3) -- (4) --(2) (3) -- (5) -- (4);
\end{tikzpicture}\\[.5em]
\begin{tikzpicture}[inner sep=1.5pt,baseline=0pt,scale=.8]
\node (0) at (-0.866,-.5) {\small\tt 0};
\node (1) at (0,1) {\small\tt 2};
\node (2) at (0.866,-.5) {\small\tt 3};
\node (4) at (0.866,0.5) {\small\tt 4};
\node (3) at (0,0) {\small\tt 1};
\node[left=2pt] at (-.8,0.2) {$G_{4,2}$:};
\draw (2) -- (1) -- (0) -- (2) -- (3) -- (0) 
     (4) -- (3) -- (1)  (2) -- (4) -- (1);
\end{tikzpicture}\ 
\begin{tikzpicture}[inner sep=1.5pt,baseline=0pt,scale=.8]
\node (0) at (-0.866,-.5) {\small\tt 0};
\node (1) at (0,1) {\small\tt 2};
\node (2) at (0.866,-.5) {\small\tt 3};
\node (4) at (0.866,0.5) {\small\tt 4};
\node (3) at (0,0) {\small\tt 1};
\node (5) at (1.732,1) {\small\tt 5};
\node[left=2pt] at (-.8,0.2) {$G_{4,3}$:};
\draw (2) -- (1) -- (0) -- (2) -- (3) -- (0) 
      (5) -- (4) -- (3) -- (1) -- (5) -- (2) -- (4) -- (1);
\end{tikzpicture}\ 
\begin{tikzpicture}[inner sep=1.5pt,baseline=0pt,scale=.8]
\node (0) at (-0.866,-.5) {\small\tt 0};
\node (1) at (0,1) {\small\tt 2};
\node (2) at (0.866,-.5) {\small\tt 3};
\node (5) at (1.732,1) {\small\tt 5};
\node (6) at (1.732,0) {\small\tt 6};
\node (4) at (0.866,0.5) {\small\tt 4};
\node (3) at (0,0) {\small\tt 1};
\node[left=2pt] at (-.8,0.2) {$G_{4,4}$:};
\draw (2) -- (1) -- (0) -- (2) -- (3) -- (0) 
      (2) -- (6) -- (5) (6) -- (4)
      (5) -- (4) -- (3) -- (1) -- (5) -- (2) -- (4) -- (1);
\end{tikzpicture}\ 
\begin{tikzpicture}[inner sep=1.5pt,baseline=0pt,scale=.8]
\node (0) at (-0.866,-.5) {\small\tt 0};
\node (1) at (0,1) {\small\tt 2};
\node (2) at (0.866,-.5) {\small\tt 3};
\node (5) at (1.732,1) {\small\tt 5};
\node (6) at (1.732,0) {\small\tt 6};
\node (7) at (2.598,.5) {\small\tt 7};
\node (4) at (0.866,0.5) {\small\tt 4};
\node (3) at (0,0) {\small\tt 1};
\node[left=2pt] at (-.8,0.2) {$G_{4,5}$:};
\draw (6) -- (7) -- (5) (7) -- (4)
      (2) -- (1) -- (0) -- (2) -- (3) -- (0) 
      (2) -- (6) -- (5) (6) -- (4)
      (5) -- (4) -- (3) -- (1) -- (5) -- (2) -- (4) -- (1);
\end{tikzpicture}\\[.5em]
\begin{tikzpicture}[inner sep=1.5pt,baseline=0pt],scale=.8
\node[left] at (-1.7,0) {$G_{5,2}:$};
\node (0) at (-1.5,.5) {\small\tt 0};
\node (1) at (-.7,-1) {\small\tt 1};
\node (2) at (-.2,1) {\small\tt 2};
\node (3) at (.2,-.5) {\small\tt 3};
\node (4) at (.6,.4) {\small\tt 4};
\node (5) at (.7,-1) {\small\tt 5};
\draw (0) -- (1) -- (2) -- (3) -- (4) -- (0) -- (2) -- (4) -- (1) -- (3) -- (0) (4) -- (5) -- (1) (3) -- (5) -- (2);
\end{tikzpicture}\quad
\begin{tikzpicture}[inner sep=1.5pt,baseline=0pt,scale=.8]
\node[left] at (-1.7,0) {$G_{5,3}:$};
\node (0) at (-1.5,.5) {\small\tt 0};
\node (1) at (-.7,-1) {\small\tt 1};
\node (2) at (-.2,1) {\small\tt 2};
\node (3) at (.2,-.5) {\small\tt 3};
\node (4) at (.6,.4) {\small\tt 4};
\node (5) at (.7,-1) {\small\tt 5};
\node (6) at (1.5,.7) {\small\tt 6};
 (5) at (.7,-1) {\small\tt 5};
\draw (0) -- (1) -- (2) -- (3) -- (4) -- (0) -- (2) -- (4) -- (1) -- (3) -- (0) (3) -- (6) -- (4) -- (5) -- (1) 
(3) -- (5) -- (2) -- (6) -- (5);
\end{tikzpicture}\quad
\begin{tikzpicture}[inner sep=1.5pt,baseline=0pt,scale=.8]
\node[left] at (-1.7,0) {$G_{5,4}:$};
\node (0) at (-1.5,.5) {\small\tt 0};
\node (1) at (-.7,-1) {\small\tt 1};
\node (2) at (-.2,1) {\small\tt 2};
\node (3) at (.2,-.5) {\small\tt 3};
\node (4) at (.6,.4) {\small\tt 4};
\node (5) at (.7,-1) {\small\tt 5};
\node (6) at (1.5,.7) {\small\tt 6};
\node (7) at (1.5,-.5) {\scriptsize\tt 7};
\draw (0) -- (1) -- (2) -- (3) -- (4) -- (0) -- (2) -- (4) -- (1) -- (3) -- (0) (3) -- (6) -- (4) -- (5) -- (1) 
(3) -- (5) -- (2) -- (6) -- (5) -- (7) -- (6) (3) -- (7) -- (4);
\end{tikzpicture}
\end{center}

\begin{center}
\begin{tabular}{|c|c|r|r|r|} \hline
&&Number of& Number of maximal&\\
Graph&Order &vertices in $S$& clique-partitions&Runtime\\ \hline
$G_{3,2}$&4&2&4&0.001\\
$G_{3,3}$&5&3 &5&0.001 \\
$G_{3,4}$&6&4 &7&0.002 \\
$G_{4,5}$&8&6 &33&0.008 \\
$G_{5,3}$&7&5 &35&0.003 \\
$G_{5,4}$&8&6 &65&0.006 \\
$G_{5,5}$&9&7 &114&0.015 \\
$G_{5,6}$&10&8 &200&0.031 \\
$G_{6,6}$&11&9 &781&0.037 \\
$G_{7,5}$&11&9 &1488&0.032 \\
$G_{7,6}$&12&10 &3135&0.082 \\
$G_{8,6}$&13&11&12913&0.173\\
$G_{9,6}$&14&12&54495 &0.408  \\ 
$G_{11,9}$&19&17&20899403&121.135 \\
$G_{12,8}$&19&17&40778092&202.608 \\
\hline
\end{tabular}
\end{center}
\end{enumerate}
The Mathematica implementation is slightly slower, but it enables a graphical visualization of the computed results. Some test examples can be seen online at

\centerline{\href{https://staff.fmi.uvt.ro/~mircea.marin/software/MCP.html}{\tt staff.fmi.uvt.ro/\textasciitilde{}mircea.marin/software/MCP.html}}

\section{Conclusion}

We developed an algorithm for computing all maximal clique-partitions in an undirected graph. The algorithm starts from the maximal clique cover of the graph and revises it, reducing the number of vertices shared among cliques by assigning them to one of the cliques they belong to. In this process, we avoid the computation of undesirable answers by detecting and discarding the search states which produce only non-maximal clique-partitions or duplicate answers. Our algorithm is optimal in the following sense: it enumerates all maximal clique-partitions, and each of them is computed only once. The set of computed partitions can be exponentially large with respect to the number of vertices shared among maximal cliques, but every answer can be computed in polynomial time (starting from all maximal cliques). Besides, the computations of different maximal clique-partitions corresponds to the exploration of different branches of the search tree for solutions, and they can be carried out independently, in parallel of each other. 
Our algorithm is iterative in the sense that it enumerates the computed answers one by one, on demand. This is highly desirable because the total number of maximal clique-partitions can be exponentially large and we may want to start analyzing and processing them as soon as possible.

In many practical applications, we are not interested to enumerate all maximal clique-partitions. Often, our algorithm can be easily adjusted to reduce the search space and compute only the preferred ones. For instance, we can impose the constraint to keep a group of vertices in the same clique by 
assigning them simultaneously (whenever possible) to the same single clique originating from a maximal clique of the graph.

\bibliographystyle{eptcs}

\end{document}